\pgfplotsset{compat=newest}
\theoremstyle{plain}
\newtheorem{theorem}{Theorem}
\newtheorem{lemma}[theorem]{Lemma}
\newtheorem{proposition}[theorem]{Proposition}
\newtheorem{corollary}[theorem]{Corollary}
\newtheorem{definition}[theorem]{Definition}
\newtheoremstyle{note}{\topsep}{\topsep}{\slshape}{}{\scshape}{}{ }{}
\theoremstyle{note}
\newtheorem{remark}[theorem]{Remark}
\newtheorem{example}[theorem]{Example}
\newcommand{\<}{\langle}
\renewcommand{\>}{\rangle}
\newcommand\be{\begin{equation}}
\newcommand\ee{\end{equation}}
\newcommand\bea{\begin{array}}
	\newcommand\eea{\end{array}}
\newcommand\ben{\begin{eqnarray}}
\newcommand\een{\end{eqnarray}}
\newcommand\ot{\otimes}
\definecolor{forest}{RGB}{11, 102,35}
\newcommand{\A}{\mathcal{A}_{n}^{t_{n}}(d)}
\newcommand{\Hom}{\operatorname{Hom}}
\newcommand{\Span}{\operatorname{span}}
\newcommand\bei{\begin{itemize}}
	\newcommand\eei{\end{itemize}}
\newcommand\bee{\begin{enumerate}}
	\newcommand\eee{\end{enumerate}}
\begin{document} 	
\title{From port-based teleportation to Frobenius reciprocity theorem: partially reduced irreducible representations and their applications}
	\author{Marek Mozrzymas$^1$, Micha{\l} Horodecki$^2$\footnote{michal.horodecki@ug.edu.pl, corresponding author}, Micha{\l} Studzi\'nski$^3$}
	\affiliation{
 $^1$Institute for Theoretical Physics, University of Wrocław,
  		50-204~Wrocław, Poland\\
    $^2$International Centre for Theory of Quantum Technologies, University of Gda{\'n}sk, 80-952, Poland\\
	    $^3$Institute of Theoretical Physics and Astrophysics, Faculty of Mathematics, Physics and Informatics,\\University of Gda\'nsk, Wita Stwosza 57, 80-308 Gda\'nsk, Poland}

\begin{abstract}
In this paper, we present the connection of two concepts as induced representation and partially reduced irreducible representations (PRIR) appear in the context of port-based teleportation protocols. 
Namely, for a given finite group $G$ with arbitrary subgroup $H$, we consider a particular case of matrix irreducible representations, whose restriction to the subgroup $H$, as a matrix representation of $H$, is completely reduced to diagonal block form with an irreducible representation of $H$ in the blocks. The basic properties of such representations are given. Then as an application of this concept, we show that the spectrum of the port-based teleportation operator acting on $n$ systems is connected in a very simple way with the spectrum of the corresponding Jucys-Murphy operator for the symmetric group $S(n-1)\subset S(n)$. This shows on the technical level relation between teleporation and one of the basic objects from the point of view of the representation theory of the symmetric group. This shows a deep connection between the central object describing properties of deterministic PBT schemes and objects appearing naturally in the abstract representation theory of the symmetric group. In particular, we present a new expression for the eigenvalues of the Jucys-Murphy operators based on the irreducible characters of the symmetric group.  As an additional but not trivial result, we give also purely matrix proof of the Frobenius reciprocity theorem for characters with explicit construction of the unitary matrix that realizes the reduction of the natural basis of induced representation to the reduced one. 
\end{abstract}
\maketitle

\section{Introduction}
Port-based teleportation (PBT)~\cite{ishizaka_asymptotic_2008,ishizaka_quantum_2009} is a remarkable protocol possessing a counter-intuitive emanating in that the teleported state requires no unitary correction and is ready for use after the sender performs a measurement and sends classical communication. No-unitary correction property has attracted wide attention from the community resulting in intensive research in the field of the PBT resulting in the development of quantum information theory. First, the PBT scheme offers a model of a universal programmable quantum processor~\cite{ishizaka_asymptotic_2008}, it gives a connection with quantum cryptography and instantaneous non-local computation~\cite{beigi2011simplified}. PBT protocols were instrumental in establishing a link between interaction complexity and entanglement in non-local computation and holography~\cite{may2022complexity}, between quantum communication complexity advantage and a
violation of a Bell inequality~\cite{buhrman_quantum_2016}, deriving fundamental bounds for quantum channels discrimination by designing PBT stretching protocols~\cite{pirandola2019fundamental}, and many other interesting results~\cite{pereira2021characterising,quintino2021quantum,PhysRevLett.122.170502}.
Describing the efficiency of the PBT protocol was a very complex task and for a long time satisfactory description, in particular, a description in higher dimensions and asymptotic behavior was missing. Difficulties came mostly from the point of view of mathematical methods and much effort was put into optimizing the protocol and its variants by developing a proper mathematical tool-kit~\cite{wang_higher-dimensional_2016,strelchuk_generalized_2013,Studzinski2017, StuNJP, MozJPA,christandl2021asymptotic,Leditzky2022} and recently in~\cite{grinko2023gelfandtsetlin}. 

In this paper, we start from Section~\ref{Gen} with a general description of the basic properties of {\it PBT-operator}, an object crucial for describing the efficiency of the deterministic PBT. We shortly remind its connection with the representation theory of the symmetric group. In particular, we focus on the aspect of the occurrence of induced representation and connection of the PBT-operator with Jucys-Murphy elements and partial transposition. We focus here also on the case of the multi-port-based teleportation protocols~\cite{StuIEEE} where we have to consider many layers of the induction process. Next, in Section~\ref{Sec:PRIR} we on the above-mentioned connection more formal. We present rigorously the concept of Partially Reduced Irreducible Representations (PRIRs)  - a notion related to subgroup adapted basis \cite{Koch_2012} - and we discuss their properties and their role in induced representation. As an application of developed techniques, we show how the spectrum of famous Jucys-Murphy elements~\cite{Mu, Ju} acting on $n-1$ systems is connected with the spectrum of the PBT operator acting on $n$ systems.  In particular, in Corollary~\ref{Sec2Cor2} we present new expressions for the spectrum of the Jucys-Murphy of the symmetric group by use of the irreducible characters. In this regard in Proposition~\ref{L:JMd2}, we present fully analytical expressions for the mentioned eigenvalues, in the natural representation of the Jucys-Murphy elements, when the dimension of the underlying space is 2. This is possible, since then we have analytical expressions for corresponding irreducible characters which are labeled by Young frames with up to two rows, see Lemma~\ref{Lemma:Harris}. In the same section, we also contain, up to our best knowledge, a completely unknown previously orthogonality relation. It is contained in Proposition~\ref{Sec2Prop2}.  Finally, we present purely matrix proof of the celebrated Frobenius reciprocity theorem for characters which was not known earlier in the literature. As an additional result, we give the explicit form of the unitary matrix $U(\beta )$ which realizes the reduction of the natural basis of induced representation to the reduced one. These results are contained in Theorem~\ref{Sec2Thm18}, preceded by auxiliary considerations.

\section{From PBT operator to induced representation and PRIR}
\label{Gen}
In this section, we will show
(based on our previous papers)  how the two concepts: the induced representation and 
PRIR appears in the context of port-based teleportation. We shall then show how this leads to the proof of the Frobenius reciprocity theorem for a subgroup of the symmetric group.

\subsection{Algebra of the symmetric group and partial transposition}
\label{secBrauer}
For the symmetric group $S(n)$, we can define its {\it natural} representation  $V^n_d:S(n)\rightarrow (\mathbb{C}^{d})^{\otimes n}$ by the following action on the set of basis vectors $\{|i\>\}_{i=1}^{d}$, where $d$ stand for the local dimension:
\begin{equation}
\label{eq:natrep}
	\forall \sigma \in S(n)\qquad V^n_d(\sigma ).|i_{1}\>\otimes |i_{2}\>\otimes
	\cdots \otimes |i_{n}\>=|i_{\sigma ^{-1}(1)}\>\otimes |i_{\sigma
			^{-1}(2)}\>\otimes \cdots \otimes |i_{\sigma ^{-1}(n)}\>.
\end{equation}
The representation $V^n_d(S(n))$  is defined in a given basis of the space $\mathbb{C}^d$, so it is a matrix representation, and operators $V^n_d(\sigma)$ just permute basis vectors according to the given permutation $\sigma$.
Whenever the numbers $n,d$ are clear from the context, we will write just simply $V(\sigma)$ instead of $V^n_d(\sigma)$ to simplify the notation.

For later purposes, we can introduce a matrix algebra $\mathcal{A}_d(n)$ spanned by the operators $V(\sigma)$ as follows:
\begin{equation}
\label{eq:groupAlg}
\mathcal{A}_n(d):=\operatorname{span}_{\mathbb{C}}\{V(\sigma) \ | \ \sigma\in S(n)\}.
\end{equation}
The above matrix algebra is just a natural representation of an abstract algebra $\mathbb{C}[S(n)]$ of the symmetric group $S(n)$. Having definition of the group algebra $\mathbb{C}[S(n)]$ in equation~\eqref{eq:groupAlg}, we can naturally introduce the algebra of partially transposed operators with respect to the last subsystem $\A$ in the following way:
\begin{definition}
	\label{def_A}
	For $\mathcal{A}_{n}(d):= \Span_{\mathbb{C}}\{V(\sigma ):\sigma \in S(n)\}$ we define a new complex algebra
	\be
 \label{eq:def_A}
	\mathcal{A}_{n}^{T_{n}}(d):= \Span_{\mathbb{C}}\{V^{T_{n}}(\sigma ) \ | \ \sigma \in S(n)\},
	\ee
	where the symbol $T_{n}$ denotes the partial transposition with respect to the last subsystem
	in the space $\Hom((\mathbb{C}^{d})^{\otimes n})$.
\end{definition}
Definition~\ref{def_A} can be extended to a larger number of partial transpositions, see for example~\cite{StuIEEE,grinko2023gelfandtsetlin}. The considered algebra $\mathcal{A}_{n}^{T_{n}}(d)$ is in fact a matrix representation of a diagram algebra called walled Brauer algebra $\mathcal{B}^{\delta}_{p,q}$, where $p,q\leq 0$, $p+q=n$, and $\delta \in \mathbb{C}$, introduced and analyzed in an abstract way in~\cite{VGTuraev_1990,KOIKE198957,BENKART1994529,BEN96,bulgakova:tel-02554375}.  The abstract algebra $\mathcal{B}^{\delta}_{p,q}$ is composed of formal combinations of diagrams. Each diagram has two rows with $p+q$ nodes, associated with a vertical wall between the first $p$ and the last $q$ nodes. These nodes are connected up in pairs in such a way that:
\begin{enumerate}
    \item if both nodes are in the same row, they must lie on different sides of the wall,
    \item if both nodes are in different rows, they must lie on the same side of the wall.
\end{enumerate}
 We illustrate the above construction with the notion of composition of such diagrams in Figure~\ref{fig:WBA}. 
\begin{figure}[h]
\includegraphics[width=.65\linewidth]{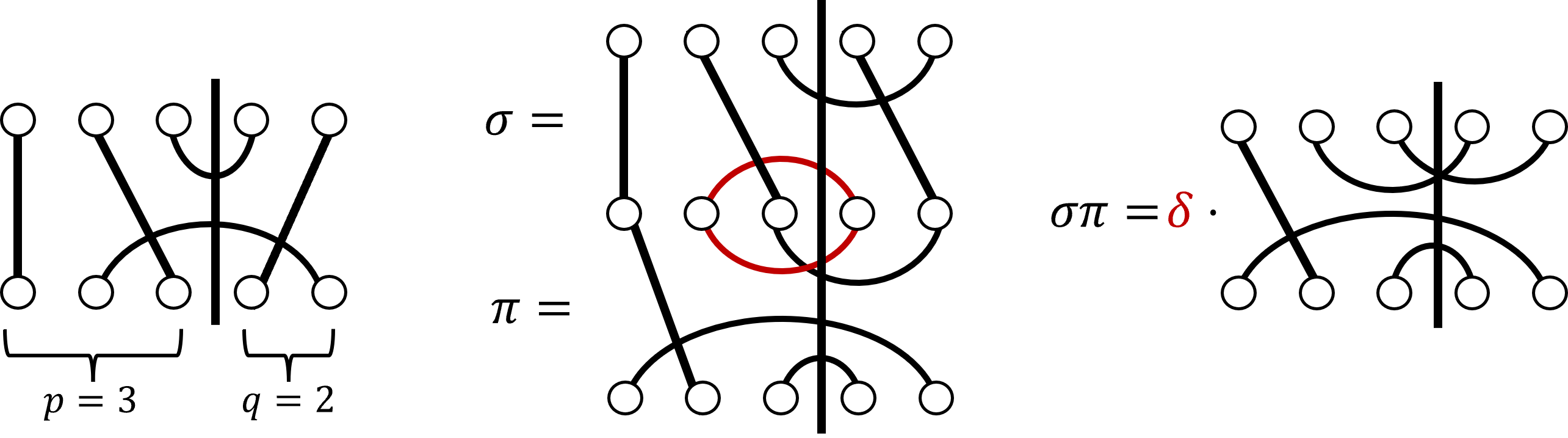}
	\caption{The left-hand side graphic presents an element from the walled Brauer algebra $\mathcal{B}^{\delta}_{3,2}$. On the right-hand side, we illustrate the composition of two diagrams  $\sigma,\pi \in \mathcal{B}^{\delta}_{3,2}$. When we identify a closed loop (red color), we multiply the resulting diagram by a scalar $\delta\in \mathbb{C}$. We see that the resulting composition $\sigma \pi$ also belongs to $\mathcal{B}^{\delta}_{3,2}$.}
	\label{fig:WBA}
\end{figure}
For any diagram from $\mathcal{B}^{\delta}_{p,q}$, the partial transposition $T$ can be understood by exchanging the nodes on the right-hand side of the wall. This procedure is illustrated in Figure~\ref{fig:WBA2}.
\begin{figure}[h]
\includegraphics[width=.35\linewidth]{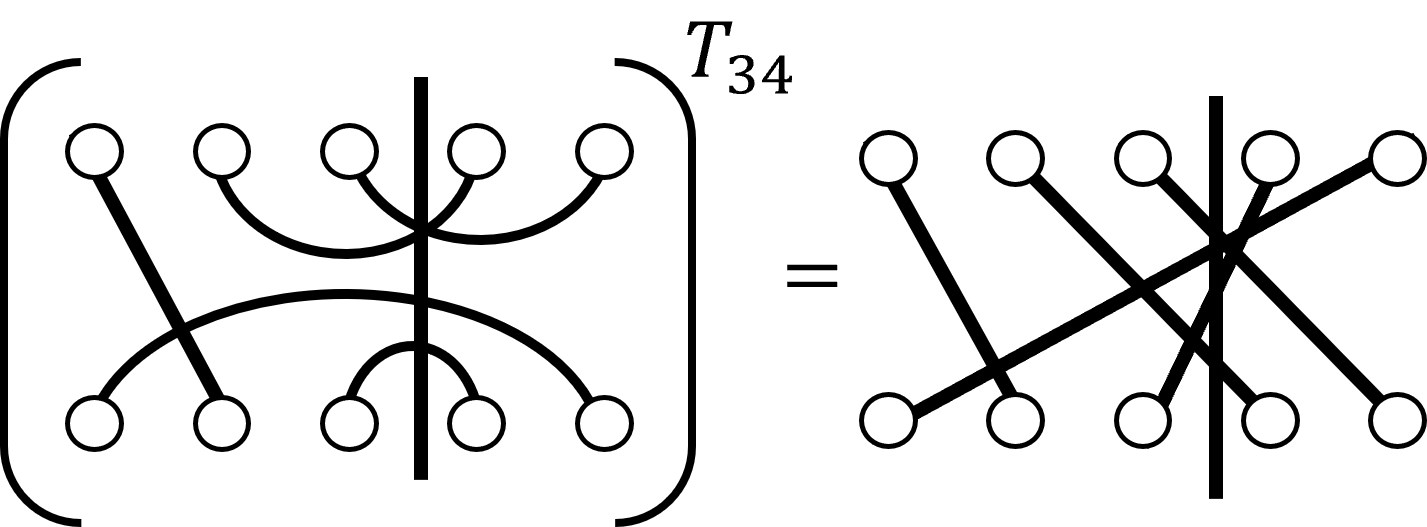}
	\caption{Graphic presents the procedure of partial transposition $T_{34}$ applied to  a diagram from $\mathcal{B}^{\delta}_{3,2}$. Notice that the resulting diagram is no longer an element of $\mathcal{B}^{\delta}_{3,2}$. The resulting diagram is an element of Brauer algebra $\mathcal{B}^{\delta}_{3+2}$ introduced in~\cite{Bra37} and graphically represents permutation from $S(3+2)=S(5)$. The walled Brauer algebra $\mathcal{B}^{\delta}_{p,q}$ is subalgebra of the Brauer algebra $\mathcal{B}^{\delta}_{p+q}$. We can say that $\pi \in \mathcal{B}^{\delta}_{p+q}$ if and only if $\pi^T \in S(p+q)$.}
	\label{fig:WBA2}
\end{figure}
To get matrix representation~\eqref{eq:def_A} we set $p=n-1,q=1$, and $\delta=d$. In this case of two systems the following relation between transposition $V((1,2))$ and unnormaalised projector on maximally entangled state $P_+=|\phi_+\>_{1,2}\<\phi_+|$ between systems 1 and 2:
\begin{equation}
\label{eq:maxent}
V^{T_2}((1,2))=dP_+,\qquad V^{T_2}((1,2))V^{T_2}((1,2))=dV^{T_2}((1,2))=dP_+,
\end{equation}
where $|\phi_+\>=\sum_{i=1}^d|i_1\>|i_2\>$. 
This relation will be exploited extensively in the further parts of this manuscript. This particular case is illustrated on the diagram level also in Figure~\ref{fig:WBA3}.
\begin{figure}[h]
\includegraphics[width=.53\linewidth]{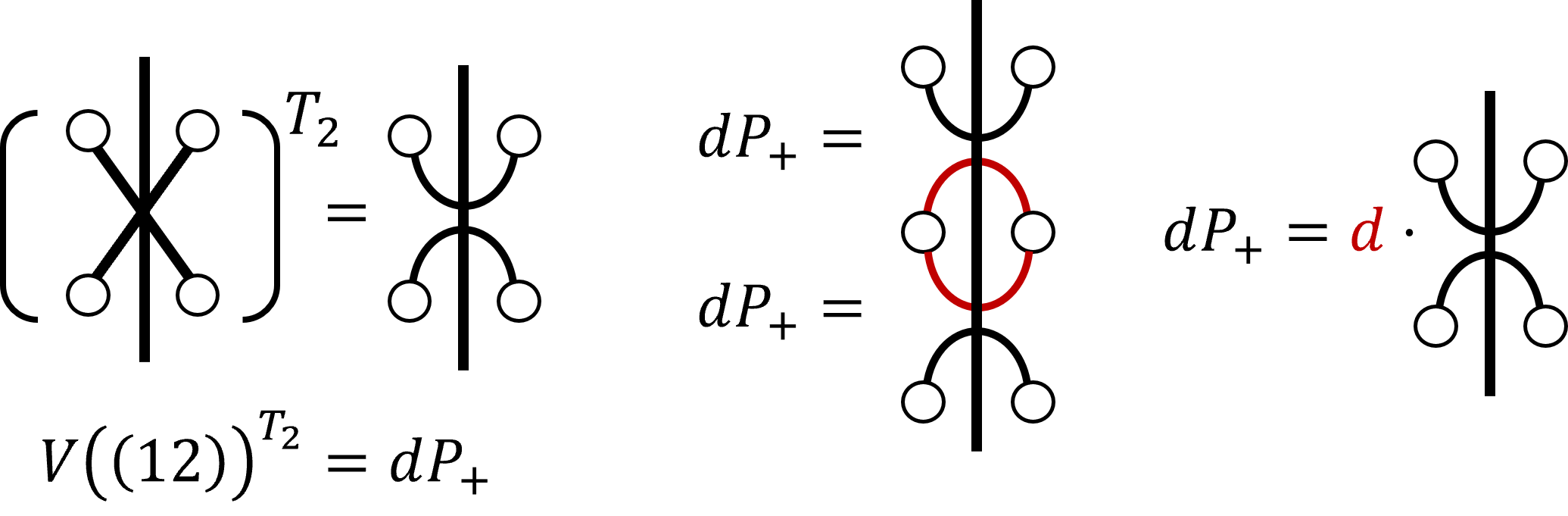}
	\caption{Graphic illustrates on the diagram-level relation given in equation~\eqref{eq:maxent}. This is a special case of the walled Brauer algebra $\mathcal{B}^{\delta}_{p,q}$ with $p=q=1$ and $\delta=d$, where $d$ is the local dimension.}
	\label{fig:WBA3}
\end{figure}

\subsection{PBT operator and induced representation}
One of the key ingredients of most of the results on port-based teleportation was understanding an operator that can be called  PBT operator. For standard port-based teleportation it acts on $n=N+1$ systems, where $N$ is the number of shared ports, and it is  of the form:
\begin{align}
\label{eq1}
    \rho=\sum_{a=1}^N
    V((a,N)) \left( \mathbf{1}^{\otimes N-1} \ot P_+\right)V((a,N)),
\end{align}
where $V(\sigma)$ is the operator that permutes systems according to permutation $\sigma$, $P_+=|\phi_+\>_{N,N+1}\<\phi_+|$ is unnormalized projector on the unnormalized maximally entangled state between systems $N$ and $N+1$,
\begin{align}
    |\phi_+\>=\sum_{i=1}^d |i\>_N|i\>_{N+1},
\end{align}
and $\mathbf{1}^{\otimes N-1}$ is the 
identity operator on rest of the systems.  

For generalization of the port-based teleportation for sending composite quantum systems~\cite{strelchuk_generalized_2013,stud2020A,Studzinski_2022,mozrzymas2021optimal} the relevant operator is a direct generalization of the above one:
\begin{align}
\label{eq2}    \rho=\frac{1}{(N-l)!}\sum_{\sigma\in S(N)}
V(\sigma)\left(\mathbf{1}^{\otimes N-l} \otimes P_+^{\otimes l}\right)V(\sigma^{-1}),
\end{align}
where $l$ is the number of systems to be teleported, and $S(N)$ is the symmetric group over $N$ elements. From Fig. \ref{fig:rho} it is easy to see on what systems identities and 
 are acting. The factor $1/(N-l)!$ is to remove overcounting.  
\begin{figure}[h]
\includegraphics[width=.8\linewidth]{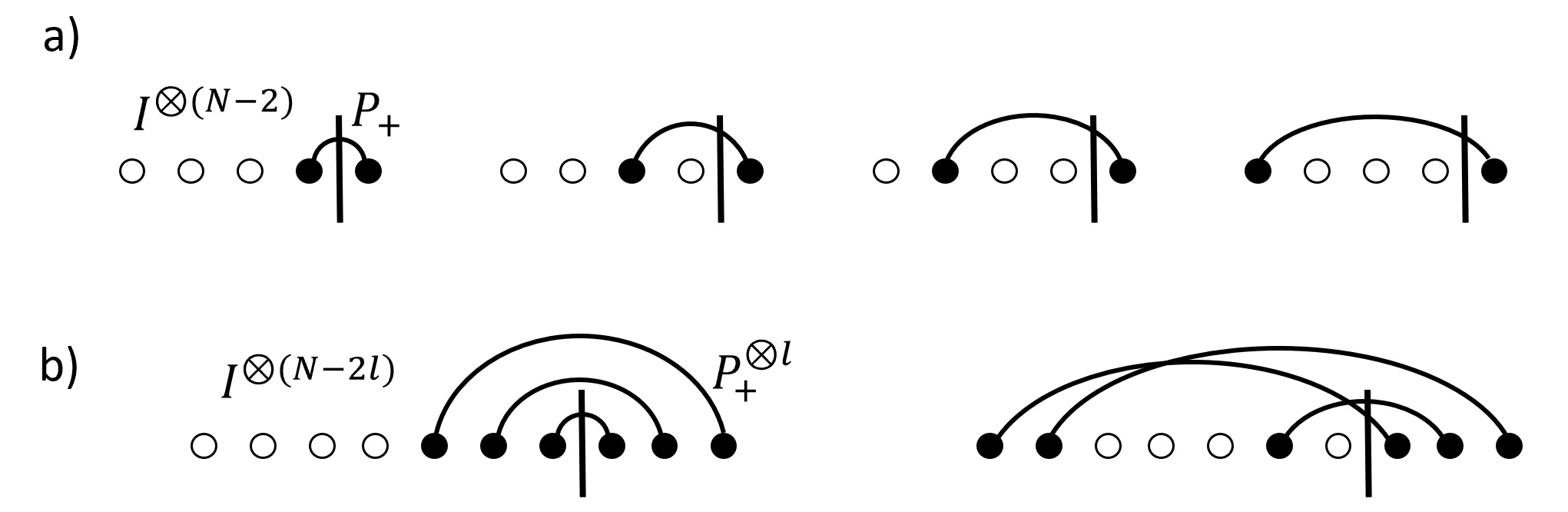}
	\caption{
 For $l=1$ the operator $\rho$ is a sum of the depicted operators in (a); here $N=4$. For $l>1$, $\rho$ is the sum of operators of type of those depicted in (b). Here $N=7$ and $l=3$.}
	\label{fig:rho}
\end{figure}
Thus the operator $\rho$ is sum of 
operators, for which $N-l$ systems are identity, while the other $2l$ are occupied by maximally entangled states. In \cite{MozJPA} it was noted that the operator is tightly related to the notion of {\it induced representation} as well as with 
PRIR ("partially reduced irreducible representation").

To see how they arise, let us first notice how permutation from $S(N)$
acts on the state. 
For each particular constituent of 
$\rho$ an effect will be the following: the ends of maximally entangled states that lie to the left of the "wall" will be 
redistributed, and permuted, and the same happens to the free systems. Let us start with one constituent of the state  - the one where all free systems are on the left:
\begin{align}
    I^{\ot N-l}\ot P_+^{\ot l}
\end{align}
Let us split the identity into projectors onto irreps of $S(N-l)$:
\begin{align}
    I^{\ot N-l}=\bigoplus_\alpha P_\alpha,
\end{align}
where $\alpha$ are all irreps of $S(N-l)$ which are present in the representation 
that permutes the "free" systems.
Now, if we apply some permutation $\sigma$ from $S(N)$, then 
we see that $P_\alpha$ is put to different systems, but still it becomes $P_\alpha$ as it is invariant under permutations. 
The same happens to any other constituent of $\rho$. 
Thus we can split $\rho$ into a direct sum of $\rho_\alpha$'s which instead of identity will have $P_\alpha$:
\begin{align}
    \rho=\bigoplus_\alpha \rho_\alpha
\end{align}
with 
\begin{align}
\rho_\alpha=\frac{1}{(N-l)!}\sum_{\sigma\in S(N)}V(\sigma) P_\alpha \ot P_+^{\otimes l} V(\sigma^{-1}). 
\end{align}
Thus $\rho$ is block diagonal, and the blocks are subspaces ${\cal H}_\alpha$, that are spanned by vectors of the form 
\begin{align}
   V(\sigma) |\phi_i^\alpha\>|\phi_+\>^{\ot l},
\end{align}
where $i=1,\ldots,d_\alpha$ with $d_\alpha$ being dimension of irrep $\alpha$; $\sigma\in S(N)$. 
However, in the above equation
we have much more vectors than needed to span ${\cal H}_\alpha$.
Indeed, a permutation $\sigma$
acting on $|\phi_i^\alpha\>|\phi_+\>^{\ot l}$ 
does the following: it changes the configuration of legs and applies some permutation on the free 
systems, see Fig. \ref{fig:induced}. 

\begin{figure}[h]
\includegraphics[width=.8\linewidth]{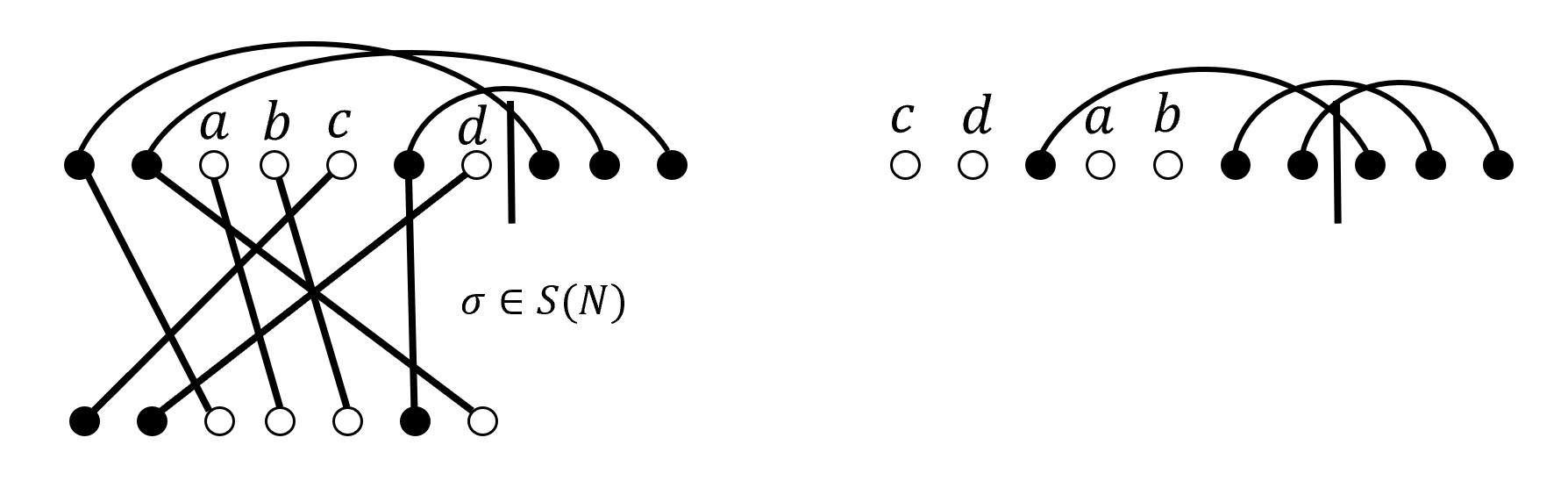}
	\caption{How permutation from $S(N)$ acts on exemplary constituent of $\rho$: it re-configures legs.}
\label{fig:induced}
\end{figure}
We see, that if we have some permutation $\sigma$ then all other permutations $\sigma'$ which have the same permutation on free systems are not needed - as the resulting vectors will have the same configuration of legs, and will differ only by permutation on free systems, which will enlarge dimension. Indeed on free systems, we have anyway full irrep basis, and permuting it we shall not go outside of the space. 

Thus since we are not interested in action instead of all $\sigma$ from $S(N)$ it is enough to take only ones from transversal, the quotient $S(N)/S(N-l)$, which we shall denote by $t_k, k=1,\ldots, N!/(N-l)!$:
\begin{align}
V(t_k)|\phi^\alpha_i\> |\phi_+\>^{\ot l} \equiv |t_k,\phi_i^\alpha\>.
\end{align} 
We are now in a position to examine the action of our group $S(N)$ on the space ${\cal H}_\alpha$. For $\sigma\in S(N)$ We write
\begin{align}
    V(\sigma) |t_k,\phi_i^\alpha\>=
    V(\sigma) V(t_k) |\phi_i^\alpha\>|\phi_+\>^{\ot l}.
\end{align}
Now we use that there is unique element $t_p$ of transversal, which satisfies:
\begin{align}
    \sigma t_k = t_p h, 
\end{align}
where $h\in S(N-l)$, so that 
\begin{align}
    V(\sigma) V(t_k) |\phi_i^\alpha\>|\phi_+\>^{\ot l}=
    V(t_p) V(h)|\phi_i^\alpha\>|\phi_+\>^{\ot l}=\sum_{j} \varphi^\alpha_{ji}(h)
    V(t_p)|\phi_j^\alpha\> |
\phi_+\>^{\ot l}
    =\sum_j \varphi^\alpha_{ji}(h) |t_p,\phi_j^\alpha\>.
\end{align}
However this is nothing but one of the definitions of the induced representation: it permutes elements of transversal, and applies an element from the subgroup to 
the irrep $\alpha$ (cf. \eqref{eq:induced})

Thus we have seen that if we apply permutation $\sigma\in S(N)$ to the total system, on the space ${\cal H}_\alpha$ it acts as the induced representation by irrep $\alpha$. 
Such representation is reducible, and the Frobenius reciprocity theorem says that the multiplicity of each irrep $\mu$ is equal to the multiplicity of the so-called, reduced representation $\alpha$ in $\mu$ (see below for definition of reduced representation).

\subsection{PRIR and proof of Frobenius theorem for symmetric group}
Suppose now that we do not know the Frobenius theorem, and let us proceed with setting orthonormal basis in $\mathcal{H}_\alpha$. 
Such basis was derived in \cite{StuIEEE} in the form of matrix basis, on space $\mathcal{H}_\alpha$ including multiplicity,  which is given by $m_\alpha$ - the multiplicity of irrep $\alpha$ 
in the natural representation of the symmetric group in $N-l$ systems (by natural we mean that it is given by operators $V(\sigma)$ which permute the systems, i.e. this is Schur-Weyl setup). 
To write it down we need a bit of preparation. 
First, for irrep $\mu$ of $S(N)$
we define the natural representation of the matrix basis (or matrix units) for the irrep $\mu$ including multiplicities as follows 
\begin{align}
    \label{eq:E}
    E_{ij}^\mu=
    \frac{d_{\mu}}{N!}\sum_{\sigma\in S(N)} \varphi^\mu_{ji}(\sigma^{-1})V(\sigma).
\end{align}
Here $\varphi^\mu_{ji}(\sigma^{-1})$ are matrix elements of irrep of $\sigma^{-1}\in S(N)$ and $i,j=1,\ldots, d_{\mu}$, where $d_{\mu}$ is dimension of the irrep $\mu$. 
Now consider subgroup $S(N-l)\subset S(N)$. If in our irrep $\mu$ we restrict to this subgroup, we obtain representation called {\it reduced representation}. 
It is reducible, and let $\alpha$ labels its irreducible components within irrep $\mu$,
determining blocks. Further, each copy of irrep $\alpha$ can appear with multiplicity, which we shall denote by $m_{\alpha/\mu}$. 
Now we can always assume that 
the basis in which the matrix elements of $\varphi^\mu(\sigma)$ are written are chosen in such a way
that the basis of \eqref{eq:E} is compatible with the $\alpha$ blocks.  This is what we call partially reduced irreducible representation.
Now the label $i_\mu$ can be alternatively written by means of a triple $i_\mu=(\alpha,r_{\mu/\alpha}, i_\alpha$). 
Namely, the index $\alpha$ tells in which irrep $\alpha$
we are, the index $r_{\mu/\alpha}$ tells in which copy of irrep $\alpha$ we are, and finally, 
the index $i_\alpha$
denotes the position in that copy 
(in the same way as $i_\mu$ denotes 
position in irrep $\mu$). 
One can use even shorter notation 
$i_\mu=(r_{\mu/\alpha},i_\alpha)$. 
We are now prepared to write 
the promised basis 
that appeared earlier in Theorem 11~\cite{StuIEEE}
\begin{align}
\label{eq:Fbasis}
F^{r_{\mu/\alpha} r_{\nu/\alpha}}_{i_{\mu}\quad j_{\nu}}:= d^k \frac{m_{\alpha}}{\sqrt{m_{\mu}m_{\nu}}}E_{i_{\mu} \ 1_{\alpha}}^{\quad r_{\mu/\alpha}} P_+^{\ot k}E^{r_{\nu/\alpha}}_{1_{\alpha}\quad  j_{\nu}}
\end{align} 
where $m_\alpha$ 
is multiplicity of irrep $\alpha$ in natural representation of $S(N-l)$ and 
and $m_\mu$, $m_\nu$ are multiplicities  of natural representation of $S(N)$, and we have written one of the index of  the operators $E$ in the partially reduced notation:
\begin{align}
E_{i_{\mu} \ 1_{\alpha}}^{\quad r_{\mu/\alpha}}\equiv
E^\mu_{i_{\mu}i'_\mu}
\end{align}
with 
$i'_\mu=(r_{\mu/\alpha},1_\alpha)$. Here $1_\alpha$ represents some arbitrarily fixed label - the operator does not depend on its choice.
Now \cite{StuIEEE}
the operator 
\begin{align}
    \sum_{i_\mu}
    F^{r_{\mu/\alpha} r_{\mu/\alpha}}_{i_{\mu}\quad i_{\mu}}
\end{align}
projects onto irrep $\mu$ within $H_\alpha$, which is the space for representation of $S(N)$ induced by irrep $\alpha$ of its subgroup $S(N-l)$. 
Thus we see here explicitly, how the indices $r$ that originally count multiplicity of $\alpha$ within reduced representation of $\mu$,
now count multiplicity of $\mu$ within the induced representation. Equation~\eqref{eq:Fbasis} is a special case of the more general result presented in Theorem 1.4 in~\cite{Ram1992MatrixUF}.

This is just the contents of 
the Frobenius reciprocity theorem 
in the special case for the symmetric group. In the following, inspired by this example, we shall prove the Frobenius reciprocity theorem for general group and subgroup.

\subsection{Duality between induced and reduced representation via PBT operator versus Jucys-Murphy operator for $l=1$}
\label{subsec:Jucys}
The dualism of reduced and induced representation manifested by the Frobenius theorem is indeed very directly seen in our above derivation of the theorem. In this section, we shall present this dualism in yet another way. 
To do so let us define the Jucys-Murphy operator~\cite{Mu,Ju,vershikBook}, which in the natural representation, is given by  
\begin{align}
J_N=\sum_{a=1}^{N-1}  V((a,N)),
\end{align}
where $V((a,N))$ are as before operators that permute $N$ systems according to permutation $(a,N)$. It is clear from its definition that the operator $J_N$ belongs to the matrix algebra $\mathcal{A}_N(d)$ defined through~\eqref{eq:groupAlg}. Due to the discussion presented in Section~\ref{secBrauer}, the operator $J_N$ is the matrix representation of the element from the Brauer algebra $\mathcal{B}^d_{N}$. The Jucys-Murphy elements play an important role in the representation theory of the symmetric group. In principle, they have found an application to an alternative approach in the construction of irreducible representations of $S(n)$ called Okounkov-Vershik approach~\cite{Okunkov,Vershik2005,vershikBook}. The spectrum of the Jucys-Murphy elements is known and discussed for example in Section 4.5 of~\cite{grinko2023gelfandtsetlin}.
Now let us consider the operator $J_{N+1}$ and apply to it partial transpose on $(N+1)-$th system. Now, since the partially transposed swap operator is equal to $dP_+$
we see that  PBT operator $\rho$
is just a partially transposed Jucys-Murphy operator on $N+1$ systems
\begin{align}
\label{eq:rhoToJM}
\rho=(J_{N+1})^{T_{N+1}}.
\end{align}
Clearly, the above operator belongs to the algebra of partially transposed permutation operators $\mathcal{A}_{N+1}^{T_{N+1}}(d)$ defined through Definition~\ref{def_A}. In other words, the operator $\rho$ is the matrix representation of the element from the walled Brauer algebra $\mathcal{B}^d_{N,1}$, according to the discussion presented in Section~\ref{secBrauer}.
The eigenvalues of this operator were found in 
\cite{Studzinski2017}  and are given by 
\begin{align}
\lambda_\mu(\alpha)=N\frac{m_\mu d_\alpha}{m_\alpha d_\mu}
\end{align}
and multiplicity of the eigenvalue is $d_\mu m_\alpha$. 
In the rest of the paper, we shall find eigenvalues of ordinary Jucys-Murphy operator $J_N$
(see Corollary \ref{cor:jucys-eig}).
They are given by 
\begin{align}
\gamma_\mu(\alpha)
=N\frac{m_\mu d_\alpha}{m_\alpha d_\mu} - d
\end{align}
and multiplicity of the eigenvalue 
is $m_\mu d_\alpha$. 
We have summarised this in Table~\ref{tab:jucys}
    \begin{table}[t]
\centering
\begin{tabular}
{||l|c|c||}
\hline
operator & eigenvalue &multiplicity \\ \hline
$J_N$ & $N\frac{m_\mu d_\alpha}{m_\alpha d_\mu} -d$ & $m_\mu d_\alpha$\\ \hline
$(J_{N+1})^{T_{N+1}}$ &$N\frac{m_\mu d_\alpha}{m_\alpha d_\mu}$ &$m_\alpha d_\mu$\\ 
\hline
\end{tabular}
\caption{Table collects relations between the eigenvalues of Jucys-Murphy operator $J_N$ and partially transposed Jucys-Murphy operator $(J_{N+1})^{T_{N+1}}$, or equivalently the PBT operator $\rho$, see~\eqref{eq:rhoToJM}. We see that the spectrum of $\rho$ is just shifted by $d$ spectrum of the Jucys-Murphy element $J_N$. Notice that operators $\rho$ and $J_N$ are defined for the different number of systems, i.e. they differ by one system. Here $\mu$ labels irreps of $S(N)$ that can be induced from irreps $\alpha$ of $S(N-1)$.}
\label{tab:jucys}
\end{table}
The discussion in the previous section shows how the multiplicity for transposed Jucys-Murphy element $\rho=(J_{N+1})^{T_{N+1}}$ is related to induced representation. Namely,
$m_\alpha$ term in multiplicity comes from the fact that space ${\cal H}_\alpha $ is labeled by $\alpha$,  which repeats $m_\alpha$ times. 
And the multiplicity $m_\mu$ comes from the fact, that $\rho$ is invariant under $\mu$, hence by Schur lemma, is proportional to identity on irreps of $\mu$. 

Similarly, it is easy to understand the opposite formula for the multiplicity of 
the Jucys-Murphy operator $J_N$.
First, since it is a combination of 
permutations, it is a direct sum over irreps $\mu$ with multiplicity $m_\mu$. Next, within irrep $\mu$, 
it is constant on irrep $\alpha$ of subgroup $S(N-1)$, hence  we have term  $d_\alpha$. 


\section{The concept of $PRIR$ for an arbitrary group $G$ and its
subgroup $H\subset G$.}
\label{Sec:PRIR}
Let $H\subset G$ be an arbitrary subgroup of $G$ with transversal $%
T=\{t_{k}:k=1,\ldots,\frac{|G|}{|H|}\equiv s\}$, i.e. we have a coset decomposition
\begin{equation}
\label{Sec2eq13}
G=\bigcup _{k=1}^{s}t_{k}H, \quad \quad \forall g\in G\quad g=t_{k}h:h\in H,
\end{equation}
where the last decomposition of $g\in G$ is unique.
Now, let us consider an arbitrary unitary irreducible representation (irrep) $\psi ^{\mu }$ of $G$, which
also will be denoted briefly as $\mu \in \widehat{G}$ where the latter is the set
of all irreps of the group $G$. The irrep $\psi ^{\mu }$ can be always unitarily
transformed to a partially reduced form, such that 
\begin{equation}
\label{S2eq2}
\operatorname{Res}\downarrow _{H}^{G}(\psi _{R}^{\mu })=\bigoplus _{\alpha \in \mu
,a_{\alpha }}\varphi ^{\alpha (a_{\alpha })},\quad a_{\alpha
}=1,\ldots,m_{\alpha }^{\mu },
\end{equation}%
where $\alpha $ labels the type of irrep of $H$ and $a_{\alpha }$ is
a number of irrep type $\alpha $ in the decomposition~\eqref{S2eq2}, and $m_{\alpha
}^{\mu }$ is the multiplicity of irrep $\varphi ^{\alpha }$ in $\operatorname{Res}\downarrow _{H}^{G}(\phi _{R}^{\mu })$. Decomposition~\eqref{S2eq2} is not unique. We
will assume that irreps $\varphi ^{\beta (b_{\beta })}$ of subgroup $H$
are identical, i.e. we have 
\begin{equation}
\varphi ^{\alpha (a_{\alpha })}=\varphi ^{\alpha },\quad \forall a_{\alpha
}=1,\ldots,m_{\alpha }^{\mu }.
\end{equation}

\begin{definition}
\bigskip A unitary matrix irrep $\psi ^{\mu }$ of $G$, with partially
reduced form~\eqref{S2eq2}, i.e. which on the subgroup $H$ has block diagonal form, we call Partially Reduced Irreducible Representation $(PRIR).$
\end{definition}

From the above, it follows that the diagonal blocks in the decomposition~\eqref{S2eq2} are labeled and in fact
ordered by the index $\alpha (a_{\alpha })$ and inside diagonal blocks the
matrix elements are labelled by indices $j_{\alpha }=1,\ldots,\dim \varphi
^{\alpha }=d_{\alpha }$ of irrep $\varphi^{\alpha (a_{\alpha })}$ of the
subgroup $H$, which is included in irrep $\psi^{\mu }$ of $G$, so in
matrix notation we have%
\begin{equation}
\label{Sec2eq3X}
\forall h\in H\quad (\psi _{R}^{\mu })_{i_{\alpha }\quad \quad j_{\beta
}}^{\alpha (a_{\alpha })\beta (b_{\beta })}(h)=\delta ^{\alpha \beta }\delta
^{a_{\alpha }b_{\beta }}\varphi _{i_{\alpha }j_{\alpha }}^{\alpha (a_{\alpha
})}(h).
\end{equation}%

The block structure of this reduced representation allows us to introduce
such a block indexation for irrep $\psi ^{\mu }$ for all elements of $G$%
\begin{equation}
\forall g\in G\quad \psi _{R}^{\mu }(g)=((\psi _{R}^{\mu })_{k_{\mu }l_{\mu
}}(g))=\bigl((\psi _{R}^{\mu })_{i_{\alpha }\quad \quad j_{\beta }}^{\alpha
(a_{\alpha })\beta (b_{\beta })}(g)\bigr),
\end{equation}%
where the indices $k_{\mu },l_{\mu }$ are standard matrix indices, so in $%
PRIR$ the standard matrix indices are replaced by indexation directly
connected with irreps of subgroup $H$ included in irrep $\psi
^{\mu }$ of $G$. Note that the diagonal blocks are square, whereas the
off-diagonal blocks in general need not be square. Thus we see that $%
PRIR^{\prime }s$ have two main features:
\begin{enumerate}
\item They are partially reduced on subgroup $H$.
\item The matrix elements of irrep $\psi ^{\mu }$ of $%
G $ are labelled by multi-indices $\binom{\alpha (a_{\alpha })}{i_{\alpha }}%
, $ $\binom{\beta (b_{\beta })}{j_{\beta }}.$
\end{enumerate}

Note that in definition of $PRIR$ we do not assume that $\operatorname{Res}
\downarrow _{H}^{G}(\psi ^{\mu })$ is simply reducible, which is important
because it allows us to give a new proof of the Frobenius reciprocity theorem
(see Thm.~\ref{Sec2Thm18}). 
Actually, if we have an inclusion chain of subgroups 
so that each inclusion is multiplicity-free, 
then we obtain a chain of $PRIRs$ 
which are multiplicity-free, which leads to the well-known Gelfand-Tsetlin basis. As said, we are interested in $PRIR$ that is not necessarily multiplicity-free. As a matter of fact, $PRIR$ is strictly related to subgroup adapted basis~\cite{Koch_2012}. 
The latter is the basis in 
the irreducible representation of a group whose elements are basis vectors from 
individual irreps of the subgroup (which are subspaces of the irrep of the group).
Now $PRIR$ is simply irreducible representation of the group written in subgroup-adapted basis.
On the level of matrices,
this is equivalent to saying that $PRIRs$ are 
block diagonal on the subgroup as in eq.~\eqref{S2eq2}.

The above-introduced indexation of $PRIRs$ is more complicated than
the standard one  but, for example, it allows to derive
some important new relations among matrix elements of $PRIRs$
which would be difficult to rewrite in standard indexation - see Proposition~\ref{Sec2Prop2} below).

\bigskip Using equation~\eqref{Sec2eq3X} we get 
\begin{equation}
\label{Sec2eq17}
(\psi _{R}^{\mu })_{i_{\alpha }\quad \quad j_{\beta }}^{\alpha (a_{\alpha
})\beta (b_{\beta })}(t_{k}h)=\sum_{k_{\beta }}(\psi _{R}^{\mu })_{i_{\alpha
}\quad \quad k_{\beta }}^{\alpha (a_{\alpha })\beta (b_{\beta
})}(t_{k})\varphi _{k_{\beta }j_{\beta }}^{\beta (b_{\beta })}(h),\quad
\forall h\in H\quad \forall t_{k}\in T. 
\end{equation}
From this, it follows that multiplication by a matrix representing elements of subgroup $H$ is
simpler than in non-$PRIR$ representations.

Now we may formulate the first important property of $PRIRs$, which is
a kind of orthogonality relation for matrix elements of transversal $T$.

\begin{proposition}
\label{Sec2Prop2}
Suppose that 
\begin{equation}
\varphi ^{\beta }\in \operatorname{Res}\downarrow _{H}^{G}(\psi _{R}^{\mu })\quad
and\quad \varphi ^{\beta }\in \operatorname{Res}\downarrow _{H}^{G}(\psi _{R}^{\nu }).
\end{equation}%
Then the matrix elements of $PRIR^{\prime }s$ $\psi
_{R}^{\mu }$ and $\phi _{R}^{\nu }$ of $G$ satisfy the following sum rule%
\begin{equation}
\sum_{k=1}^{s}\sum_{k_{\beta }=1}^{|\beta |}(\psi _{R}^{\mu })_{i_{\alpha
}\quad \quad k_{\beta }}^{\alpha (a_{\alpha })\beta (b_{\beta
})}(t_{k}^{-1})(\psi _{R}^{\nu })_{k_{\beta }\quad \quad j_{\gamma
}}^{(\beta ,b_{\beta }^{\prime })(\gamma ,c_{\gamma })}(t_{k})=\frac{|G|}{|H|%
}\frac{d_{\beta }}{d_{\mu }}\delta ^{\mu \nu }\delta ^{b_{\beta }b_{\beta
}^{\prime }}\delta ^{\alpha \gamma }\delta ^{a_{\alpha }c_{\gamma }}\delta
_{i_{\alpha }j_{\gamma }},
\end{equation}%
where $\varphi ^{\alpha }\in \operatorname{Res}\downarrow _{H}^{G}(\phi _{R}^{\mu
})\quad and\quad \varphi ^{\gamma }\in \operatorname{Res}\downarrow _{H}^{G}(\phi
_{R}^{\nu })$.
\end{proposition}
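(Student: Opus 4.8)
The plan is to derive this sum rule from the Schur orthogonality relations for the full irreps $\psi^\mu_R$ and $\psi^\nu_R$ of $G$, exploiting only that these are written in PRIR form. First I would start from the standard great orthogonality theorem for matrix irreps of a finite group: for any $g$-sum over all of $G$,
\begin{equation}
\sum_{g\in G}(\psi^\mu_R)_{k_\mu l_\mu}(g^{-1})(\psi^\nu_R)_{p_\nu q_\nu}(g)=\frac{|G|}{d_\mu}\delta^{\mu\nu}\delta_{k_\mu q_\nu}\delta_{l_\mu p_\nu},
\end{equation}
where I have used unitarity to write one factor with $g^{-1}$. Now I would split the sum over $G$ using the coset decomposition~\eqref{Sec2eq13}, $g=t_k h$, so that $\sum_{g\in G}=\sum_{k=1}^{s}\sum_{h\in H}$, and insert the PRIR factorization~\eqref{Sec2eq17} (and its inverse-argument analogue) to peel off the $h$-dependence as a product of $\varphi^\beta(h)$ matrix elements sitting in a single diagonal block.

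The key computational step is then to carry out the remaining sum over $h\in H$ using the small-group orthogonality relation for the irrep $\varphi^\beta$: $\sum_{h\in H}\varphi^\beta_{k_\beta m_\beta}(h^{-1})\varphi^\beta_{n_\beta k'_\beta}(h)=\frac{|H|}{d_\beta}\delta_{k_\beta k'_\beta}\delta_{m_\beta n_\beta}$. Combining this with the $|G|/d_\mu$ from the big orthogonality relation produces the prefactor $\frac{|G|}{|H|}\frac{d_\beta}{d_\mu}$ and pins down the surviving Kronecker deltas. The hypothesis that $\varphi^\beta$ occurs in the restrictions of both $\psi^\mu_R$ and $\psi^\nu_R$ is exactly what guarantees the relevant diagonal blocks exist so the contraction is nontrivial; the $\delta^{\mu\nu}$ and $\delta^{b_\beta b'_\beta}$ come from the $G$-level orthogonality (same irrep, and, within it, the block index must match), while $\delta^{\alpha\gamma}\delta^{a_\alpha c_\gamma}\delta_{i_\alpha j_\gamma}$ is the leftover matching of the outer (un-summed) multi-indices.

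I expect the main obstacle to be bookkeeping rather than conceptual: one must be careful that the PRIR factorization~\eqref{Sec2eq17} applies cleanly only because the block containing $\varphi^\beta$ is genuinely block-diagonal on $H$, so the $h$-action does not mix the $\beta(b_\beta)$ block with others — this is where the defining property of a PRIR is essential and where a naive computation with an arbitrary basis would fail. A secondary subtlety is handling the argument $t_k^{-1}$ versus $t_k$: since the $t_k$ need not form a subgroup, $t_k^{-1}$ is generally not a transversal element, so I would use unitarity, $(\psi^\mu_R)(t_k^{-1})=(\psi^\mu_R)(t_k)^\dagger$, to keep everything expressed through the matrices $(\psi^\mu_R)(t_k)$ already appearing in~\eqref{Sec2eq17}, and only then substitute. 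Once these two points are handled, summing the two orthogonality relations in sequence gives the claimed identity directly.
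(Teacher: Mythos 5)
Your overall route --- Schur orthogonality over $G$, coset splitting, then an $H$-level contraction --- is the same one the paper sketches, but the pivotal middle step fails as you describe it. With the left-coset decomposition $g=t_kh$ of~\eqref{Sec2eq13}, the PRIR rule~\eqref{Sec2eq17} attaches the $h$-dependence to the \emph{outer} indices, not to the contracted one: one has $(\psi_R^{\mu})^{\alpha(a_\alpha)\beta(b_\beta)}_{i_\alpha\;k_\beta}(h^{-1}t_k^{-1})=\sum_{m_\alpha}\varphi^{\alpha}_{i_\alpha m_\alpha}(h^{-1})(\psi_R^{\mu})^{\alpha(a_\alpha)\beta(b_\beta)}_{m_\alpha\;k_\beta}(t_k^{-1})$ and $(\psi_R^{\nu})^{\beta(b'_\beta)\gamma(c_\gamma)}_{k_\beta\;j_\gamma}(t_kh)=\sum_{n_\gamma}(\psi_R^{\nu})^{\beta(b'_\beta)\gamma(c_\gamma)}_{k_\beta\;n_\gamma}(t_k)\varphi^{\gamma}_{n_\gamma j_\gamma}(h)$, so the $h$'s enter through $\varphi^{\alpha}$ and $\varphi^{\gamma}$, never through $\varphi^{\beta}$ on $k_\beta$. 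Carrying out the $h$-sum with the small-group orthogonality then contracts $m_\alpha$ with $n_\gamma$ against the free indices $i_\alpha,j_\gamma$, and what survives is only the statement that the \emph{trace} over each $\alpha$-block of the transversal sum has the advertised value --- not the elementwise identity you are trying to prove. This is a genuine gap, not bookkeeping: the elementwise identity is in fact sensitive to the choice of transversal. For $S(2)\subset S(3)$ with the standard two-dimensional PRIR (diagonal on $(12)$), starting from the transversal $\{e,(13),(23)\}$ and replacing $(13)$ by the $3$-cycle lying in its left coset changes the off-diagonal sum $\sum_k\psi_{11}(t_k^{-1})\psi_{12}(t_k)$ from $0$ to $\pm\sqrt{3}/2$, violating the claimed $\delta^{\alpha\gamma}$.

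There are two ways to close the gap, and you must commit to one. (i) Decompose into \emph{right} cosets, $g=ht_k$: then both $h$-factors really do sit on the contracted $\beta$-index, $\sum_{k_\beta}\varphi^{\beta}_{l_\beta k_\beta}(h^{-1})\varphi^{\beta}_{k_\beta l'_\beta}(h)=\delta_{l_\beta l'_\beta}$ collapses the $h$-dependence (you do not even need the small-group orthogonality --- the $|H|$ terms are identical), and the prefactor $\frac{|G|}{|H|}\frac{d_\beta}{d_\mu}$ emerges with the extra $d_\beta$ supplied by summing $k_\beta$ on the right-hand side of the $G$-orthogonality relation, not from an $|H|/d_\beta$ factor as you predicted. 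This proves the sum rule for any transversal of the right cosets $Ht_k$. (ii) Keep left cosets, accept that orthogonality fixes only the block traces, and add a Schur-lemma step: under the conjugation condition~\eqref{Sec2eq4} (satisfied by the transpositions $\{(a\,m)\}$ for $S(m-1)\subset S(m)$) the map $X\mapsto\sum_k\psi_R^{\mu}(t_k^{-1})X\psi_R^{\nu}(t_k)$ sends the $\beta$-block embedding to an $H$-intertwiner, which is therefore block-scalar, and the traces then determine it. Your secondary worry about $t_k^{-1}$ versus $t_k$ is harmless by comparison; the substantive issue is on which side of $t_k$ the subgroup element lands.
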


\begin{proof}
The proof is based on the classical orthogonality relations for irreps, which in $PRIR$ notation takes the form
\begin{equation}
\label{Sec2eq3}
\sum_{g\in G}(\psi _{R}^{\mu })_{i_{\alpha }\quad \quad k_{\beta }}^{\alpha
(a_{\alpha })\beta (b_{\beta })}(g^{-1})(\psi _{R}^{\nu })_{k_{\beta }\quad
\quad j_{\gamma }}^{\beta (b_{\beta }^{\prime })\gamma (c_{\gamma })}(g)=%
\frac{|G|}{d_{\mu }}\delta ^{\mu \nu }\delta ^{b_{\beta }b_{\beta }^{\prime
}}\delta ^{\alpha \gamma }\delta ^{a_{\alpha }c_{\gamma }}\delta _{i_{\alpha
}j_{\gamma }},
\end{equation}%
which means, that even if $\alpha =\gamma $, i.e. these representations
are of the same type, but $a_{\alpha }\neq c_{\gamma }$, the $RHS$ of~\eqref{Sec2eq3} is equal to zero. 
Similarly, if the indices $b_{\beta},b_{\beta }^{\prime }$ which enumerate irreps $\beta $ are not
equal, then $RHS$ of~\eqref{Sec2eq3} is equal to zero. This follows from the fact that in the
classical orthogonality relations if corresponding indices of irreps are not equal, then the $RHS$ of the orthogonality relations is equal to zero, and in $PRIR$ notation the irreps $\varphi ^{\alpha }$
of subgroup $H$ and their indices $i_{\alpha }$ play the role of indices in irrepss $\psi _{R}^{\mu }$. The next part of the proof is a simple generalization of the proof of Proposition 17 in~\cite{MozJPA}.
\end{proof}

\begin{remark}
From the form of $LHS$ of equation~\eqref{Sec2eq3} in the thesis of Proposition~\ref{Sec2Prop2},
in particular from the fact that the second sum in $LHS$ runs over a lower part
of multi-index $\binom{\beta (b_{\beta })}{k_{\beta }}$ 
only (over $k_{\beta }$) it is clear that $PRIR$ multi-index notation is essential to formulate
and prove this result.
\end{remark}

Next result concerns properties of the transversal $T=\{t_{k}:k=1,\ldots,\frac{
|G|}{|H|}\equiv s\}$ of the subgroup $H\subset G$ and transversal element $\Upsilon =\sum_{k=1}^{s}t_{k}$ in the group algebra $\mathbb{C}[G].$ It is well known that any irrep of any group $G$ is also an irrep  the corresponding group algebra $\mathbb{C}[G]$, so we may use the concept of $PRIR^{\prime }s$ in the group
algebra.

\begin{proposition}
\label{Sec2Prop4}
Suppose that $\{t_k\}$ is any transversal of a group $G$ with respect to subgroup $H$~\eqref{Sec2eq13}, $\psi _{R}^{\mu }$ is $PRIR$ of a group
algebra $\mathbb{C}[G]$, such that 
\begin{enumerate}
    \item $\operatorname{Res}\downarrow _{H}^{G}(\psi _{R}^{\mu })$ is
simply reducible, 
    \item and we have
    \begin{equation}
\label{Sec2eq4}
\forall h\in H\quad h(\sum_{k=1}^{s}t_{k})h^{-1}=(\sum_{k=1}^{s}t_{k}),
\end{equation}%
\end{enumerate}
then 
\begin{equation}
\label{Sec2eq5}
\sum_{k=1}^{s}(\psi _{R}^{\mu })(t_{k})=\bigoplus _{\alpha \in \mu }\eta _{\mu
}(\alpha )\mathbf{1}_{\alpha },
\end{equation}%
i.e. on $RHS$ we have a block diagonal matrix, such that the diagonal blocks
are unit matrices multiplied by numbers $\eta _{\mu }(\alpha )$. These numbers
do not depend on the form of $PRIR$ $\psi _{R}^{\mu }$, i.e. for all
possible choices of $PRIR$ representation, satisfying the above assumptions,
these numbers are the same.
\end{proposition}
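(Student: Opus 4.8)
The plan is to show that $\Upsilon_R^\mu := \sum_{k=1}^s (\psi_R^\mu)(t_k)$ commutes with the restricted representation $\varphi^\alpha$ acting in each block, and then invoke Schur's lemma. First I would observe that assumption~(2), $h\,\Upsilon\,h^{-1}=\Upsilon$ in $\mathbb{C}[G]$, gives immediately $(\psi_R^\mu)(h)\,\Upsilon_R^\mu\,(\psi_R^\mu)(h^{-1}) = \Upsilon_R^\mu$ for all $h\in H$, i.e. $\Upsilon_R^\mu$ intertwines $\operatorname{Res}\downarrow_H^G(\psi_R^\mu)$ with itself. Now write $\Upsilon_R^\mu$ in the block notation of~\eqref{Sec2eq3}: its $\binom{\alpha(a_\alpha)}{i_\alpha},\binom{\beta(b_\beta)}{j_\beta}$ entry is the matrix $(\Upsilon_R^\mu)^{\alpha(a_\alpha)\beta(b_\beta)}_{i_\alpha\ \ j_\beta}$. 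Since $\operatorname{Res}\downarrow_H^G(\psi_R^\mu)$ is block-diagonal with block $(\alpha,a_\alpha)$ carrying the irrep $\varphi^\alpha$, the intertwining relation forces each block of $\Upsilon_R^\mu$ to satisfy $\varphi^\alpha(h)\,(\Upsilon_R^\mu)^{\alpha(a_\alpha)\beta(b_\beta)} = (\Upsilon_R^\mu)^{\alpha(a_\alpha)\beta(b_\beta)}\,\varphi^\beta(h)$ for all $h\in H$.

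Next I would apply Schur's lemma block by block. If $\alpha\neq\beta$ the blocks vanish. If $\alpha=\beta$, the block is a scalar multiple of the identity; here is exactly where assumption~(1), that $\operatorname{Res}\downarrow_H^G(\psi_R^\mu)$ is \emph{simply reducible} (each $\alpha$ appears with multiplicity $m_\alpha^\mu\le 1$), enters: it guarantees there is at most one block of each type, so there are no off-diagonal ``multiplicity'' blocks $(\alpha(a_\alpha),\alpha(a'_\alpha))$ with $a_\alpha\neq a'_\alpha$ to worry about, and every surviving block is diagonal-in-type and hence, by Schur, $\eta_\mu(\alpha)\mathbf{1}_\alpha$. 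This yields~\eqref{Sec2eq5}.

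It remains to show the numbers $\eta_\mu(\alpha)$ are independent of the chosen $PRIR$ realization. The plan is to extract $\eta_\mu(\alpha)$ as a trace: from~\eqref{Sec2eq5}, $\eta_\mu(\alpha) = \frac{1}{d_\alpha}\tr\big((\psi_R^\mu)(\Pi_\alpha\,\Upsilon)\big)$, where $\Pi_\alpha\in\mathbb{C}[H]$ is the central idempotent projecting onto the isotypic component of $\varphi^\alpha$, namely $\Pi_\alpha = \frac{d_\alpha}{|H|}\sum_{h\in H}\overline{\chi^\alpha(h)}\,h$. Then $\eta_\mu(\alpha) = \frac{d_\alpha}{|H|\,d_\alpha}\sum_{h\in H}\sum_{k=1}^s \overline{\chi^\alpha(h)}\,\chi^\mu(h\,t_k)$, which depends only on the characters $\chi^\mu,\chi^\alpha$ and on the transversal, not on the particular unitary that puts $\psi^\mu$ into partially reduced form. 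Any two $PRIR$ forms of $\psi^\mu$ are unitarily conjugate, and the trace is conjugation-invariant, so $\eta_\mu(\alpha)$ is well defined.

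The main obstacle I anticipate is the bookkeeping around assumption~(1): one has to be careful that ``simply reducible'' is used precisely to kill the off-diagonal-in-multiplicity blocks that would otherwise survive Schur's lemma as arbitrary (not-necessarily-scalar) intertwiners between isomorphic copies of $\varphi^\alpha$. Without it, $\Upsilon_R^\mu$ restricted to the $\alpha$-isotypic component would only be of the form $M_\alpha\otimes\mathbf{1}_{d_\alpha}$ for some $m_\alpha^\mu\times m_\alpha^\mu$ matrix $M_\alpha$, not a scalar; the single-multiplicity hypothesis collapses $M_\alpha$ to a number. Everything else is a routine application of Schur's lemma and the orthogonality of characters.
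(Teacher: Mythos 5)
Your argument is correct and follows essentially the same route as the paper's (sketched) proof: assumption~\eqref{Sec2eq4} makes $\sum_k\psi_R^\mu(t_k)$ an intertwiner of $\operatorname{Res}\downarrow_H^G(\psi_R^\mu)$, Schur's lemma kills the blocks between inequivalent irreps of $H$ and scalarizes the rest, and simple reducibility is exactly what collapses the would-be multiplicity matrix $M_\alpha$ to a single number. Your closing trace identity $\eta_\mu(\alpha)=\frac{1}{|H|}\sum_{h,k}\overline{\chi^\alpha(h)}\,\chi^\mu(ht_k)$ is a welcome addition: it makes explicit the independence of $\eta_\mu(\alpha)$ from the chosen $PRIR$ realization, which the paper asserts but does not spell out.
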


\begin{proof}
We give here a sketch of the proof. The assumption of simple reducibility
and condition~\eqref{Sec2eq4} together with classical Schur Lemma for irreps
implies that equation~\eqref{Sec2eq5} must hold. This result is easy to see due to
the $PRIR$ blocks indexation of $PRIR$ $\psi_{R}^{\mu }$.
\end{proof}

From Proposition~\ref{Sec2Prop4} we get immediately:
\begin{corollary}
For any $PRIR$ $\psi _{R}^{\mu }$ transversal element $\Upsilon
=\sum_{k=1}^{s}t_{k}\in \mathbb{C}[G]$ takes in $\psi _{R}^{\mu }$ diagonal form 
\begin{equation}
\psi _{R}^{\mu }(\Upsilon )=\bigoplus _{\alpha \in \mu }\eta _{\mu }(\alpha )%
\mathbf{1}_{\alpha }
\end{equation}
and the numbers $\eta _{\mu }(\alpha )$ form the spectrum of the operator $\psi
_{R}^{\mu }(\Upsilon )$. Multiplicity of eigenvalue $\eta _{\mu
}(\alpha )$ is equal to $\dim \varphi ^{\alpha }=d_{\alpha }$. Decomposing
arbitrary representation $\Psi $ of the group $G$ into $PRIR^{\prime }s$ $%
\psi _{R}^{\mu }$ and using Proposition~\ref{Sec2Prop4} we get the spectral
decomposition of the transversal operator $\Psi (\Upsilon )$.
\end{corollary}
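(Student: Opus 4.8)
The plan is to obtain the Corollary as the specialisation of Proposition~\ref{Sec2Prop4} to the group-algebra element $\Upsilon=\sum_{k=1}^{s}t_{k}$, followed by routine bookkeeping. First I would extend $\psi^{\mu}_{R}$ $\mathbb{C}$-linearly from $G$ to $\mathbb{C}[G]$, so that $\psi^{\mu}_{R}(\Upsilon)=\sum_{k=1}^{s}\psi^{\mu}_{R}(t_{k})$; under the hypotheses carried over from Proposition~\ref{Sec2Prop4} — that $\operatorname{Res}\downarrow_{H}^{G}(\psi^{\mu}_{R})$ is simply reducible and that the chosen transversal satisfies the centralising condition \eqref{Sec2eq4}, which for $\Upsilon$ reads $h\Upsilon h^{-1}=\Upsilon$ for all $h\in H$ — this sum equals $\bigoplus_{\alpha\in\mu}\eta_{\mu}(\alpha)\mathbf{1}_{\alpha}$. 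That is precisely the first displayed identity of the Corollary, and the independence of the numbers $\eta_{\mu}(\alpha)$ from the particular choice of $PRIR$ is inherited verbatim from the Proposition.

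Once the block-diagonal normal form $\psi^{\mu}_{R}(\Upsilon)=\bigoplus_{\alpha\in\mu}\eta_{\mu}(\alpha)\mathbf{1}_{\alpha}$ is in hand, the spectral statement is immediate linear algebra: a block-diagonal matrix whose $\alpha$-block is the scalar $\eta_{\mu}(\alpha)$ times the $d_{\alpha}\times d_{\alpha}$ identity has spectrum $\{\eta_{\mu}(\alpha):\alpha\in\mu\}$, with the value $\eta_{\mu}(\alpha)$ contributed with multiplicity $d_{\alpha}=\dim\varphi^{\alpha}$. Since the $\alpha\in\mu$ occur multiplicity-free (simple reducibility), the blocks exhaust the carrier space of dimension $d_{\mu}=\sum_{\alpha\in\mu}d_{\alpha}$, so no other eigenvalues appear; I would phrase the multiplicity claim blockwise, noting that if two distinct blocks happen to share a value their multiplicities simply add.

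For the final assertion I would decompose an arbitrary representation $\Psi$ of $G$ into irreducibles and unitarily bring each irreducible summand to $PRIR$ form — possible exactly by the construction leading to \eqref{S2eq2} — writing $\Psi\cong\bigoplus_{\mu}\bigoplus_{r=1}^{n_{\mu}}\psi^{\mu}_{R}$ with $n_{\mu}$ the multiplicity of $\mu$ in $\Psi$. Because $\Psi(\Upsilon)$ respects this decomposition, $\Psi(\Upsilon)\cong\bigoplus_{\mu}\bigoplus_{r=1}^{n_{\mu}}\bigoplus_{\alpha\in\mu}\eta_{\mu}(\alpha)\mathbf{1}_{\alpha}$, which is the asserted spectral decomposition of $\Psi(\Upsilon)$: its spectrum is $\{\eta_{\mu}(\alpha):\mu\ \text{occurs in}\ \Psi,\ \alpha\in\mu\}$ and the multiplicity of a given value is the sum of $n_{\mu}d_{\alpha}$ over all pairs $(\mu,\alpha)$ realising it.

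I do not expect a genuine obstacle here; the mathematical content is Proposition~\ref{Sec2Prop4} plus linear algebra, and the main thing to get right is the hypothesis bookkeeping. The Corollary tacitly inherits both assumptions of Proposition~\ref{Sec2Prop4}: the centralising condition \eqref{Sec2eq4} is a real constraint, since each set $\{h t_{k}h^{-1}\}_{k}$ is again a transversal but generally a different one and such transversal-sums generally differ; and for the ``arbitrary $\Psi$'' part one needs every irrep occurring in $\Psi$ to restrict simply reducibly to $H$, for otherwise Schur's lemma forces the $\alpha$-blocks of $\psi^{\mu}_{R}(\Upsilon)$ to be only block-scalar rather than scalar. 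Both conditions hold in the symmetric-group setting of Section~\ref{subsec:Jucys}, which is where the Corollary is applied.
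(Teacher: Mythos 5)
Your proposal is correct and follows exactly the route the paper intends: the paper offers no separate proof, deriving the corollary ``immediately'' from Proposition~\ref{Sec2Prop4}, and your write-up simply spells out that specialisation (linear extension to $\mathbb{C}[G]$, reading the spectrum off the block-scalar form, and summing over irreducible constituents for a general $\Psi$). Your closing remark that the corollary, despite saying ``for any $PRIR$,'' tacitly inherits both hypotheses of Proposition~\ref{Sec2Prop4} (simple reducibility of the restriction and the centralising condition \eqref{Sec2eq4}) is a correct and worthwhile clarification of the paper's somewhat loose statement.
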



As an important example of application of $PRIRs$ we have the following:

\begin{proposition}
Let $G=S(n)$ and $H=S(n-1)$ with standard transversal $T=\{(an):a=1,\ldots,n\}$,
which satisfy the assumptions of Proposition~\ref{Sec2Prop4}. Then for any $PRIR$ $\psi_{R}^{\mu }$ of the group $S(n)$ we have 
\begin{equation}
\sum_{a=1}^{n}(\psi _{R}^{\mu })(an)=\psi _{R}^{\mu }(\Upsilon )=\bigoplus
_{\alpha \in \mu }\eta _{\mu }(\alpha )\mathbf{1}_{\alpha },
\end{equation}
\begin{equation}
\eta _{\mu }(\alpha )=\left( 
\begin{array}{c}
n \\ 
2
\end{array}
\right) \frac{\chi ^{\mu }(12)}{d_{\mu }}-\left( 
\begin{array}{c}
n-1 \\ 
2
\end{array}
\right) \frac{\chi ^{\alpha }(12)}{d_{\alpha }}+1,\quad
\end{equation}
where $\chi ^{\mu }$ is the character of $\psi _{R}^{\mu }$ and $\chi
^{\alpha }$ s the character of $\varphi ^{\alpha }$. We see that numbers $%
\eta _{\mu }(\alpha )$ depend only on the pure characteristics of irreps $\psi _{R}^{\mu }$ and $\varphi ^{\alpha }\in \operatorname{Res}\downarrow _{H}^{G}(\phi _{R}^{\mu })$ (for brevity $\alpha \in \mu )$.
\end{proposition}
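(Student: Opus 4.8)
The plan is to reduce $\Upsilon$ to a shifted Jucys--Murphy element and then evaluate it via central elements of the relevant group algebras. First I would verify the hypotheses of Proposition~\ref{Sec2Prop4}. Simple reducibility of $\operatorname{Res}\downarrow_{H}^{G}(\psi_R^\mu)$ is precisely the branching rule for symmetric groups: restricting an irrep $\mu$ of $S(m)$ to $S(m-1)$ gives a multiplicity-free sum over the Young diagrams obtained from $\mu$ by deleting one box. The invariance condition~\eqref{Sec2eq4} holds because conjugation by $h\in S(m-1)$ fixes $m$, hence maps the transposition $(a\,m)$ to $(h(a)\,m)$ and fixes $t_m=(m\,m)=e$; since $a\mapsto h(a)$ permutes $\{1,\dots,m-1\}$, the sum $\Upsilon=\sum_{k=1}^{m}t_k$ is unchanged. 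Thus Proposition~\ref{Sec2Prop4} already yields the block-diagonal scalar form $\psi_R^\mu(\Upsilon)=\bigoplus_{\alpha\in\mu}\eta_\mu(\alpha)\mathbf 1_\alpha$, and it remains to identify the numbers $\eta_\mu(\alpha)$.

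For that, I would write $\Upsilon=e+J_m$ with $J_m=\sum_{a=1}^{m-1}(a\,m)$ the Jucys--Murphy element, and use the elementary identity $J_m=X_m-X_{m-1}$, where $X_k=\sum_{1\le i<j\le k}(i\,j)$ is the sum of all transpositions in $S(k)$. The element $X_m$ is central in $\mathbb C[S(m)]$, so by Schur's lemma $\psi_R^\mu(X_m)=c_\mu\mathbf 1$; taking the trace and using that the $\binom m2$ transpositions form a single conjugacy class gives $c_\mu=\binom m2\,\chi^\mu(12)/d_\mu$. Similarly $X_{m-1}$ is central in $\mathbb C[S(m-1)]$, and in the $PRIR$ the restriction to $H$ places the single copy of $\varphi^\alpha$ in the $\alpha$-block (this is where simple reducibility is used); hence $\psi_R^\mu(X_{m-1})$ acts on that block as $c_\alpha\mathbf 1_\alpha$ with $c_\alpha=\binom{m-1}2\,\chi^\alpha(12)/d_\alpha$. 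Adding the contribution $\mathbf 1_\alpha$ coming from $e$, the $\alpha$-block of $\psi_R^\mu(\Upsilon)$ equals $(1+c_\mu-c_\alpha)\mathbf 1_\alpha$, which is exactly the asserted formula for $\eta_\mu(\alpha)$. As a by-product this argument re-derives the block-diagonal scalar structure directly, so the explicit appeal to Proposition~\ref{Sec2Prop4} could even be bypassed.

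I do not expect a genuine obstacle here; the only points demanding care are that one must invoke the multiplicity-freeness of the branching $S(m)\downarrow S(m-1)$ so that each $\alpha\in\mu$ corresponds to a single block on which $X_{m-1}$ acts as a scalar, and that $X_{m-1}$ is central only in $\mathbb C[S(m-1)]$ — so it must be evaluated block by block rather than as a global scalar on the irrep $\mu$.
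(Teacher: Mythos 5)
Your proposal is correct and follows essentially the same route as the paper: the identity $\Upsilon=e+X_m-X_{m-1}$ with $X_k$ the class sum of transpositions in $S(k)$, Schur's lemma applied to $X_m$ globally and to $X_{m-1}$ block-by-block via the $PRIR$ structure, is precisely the paper's computation (the paper writes it as $X_m=J_m+X_{m-1}$ and moves $X_{m-1}$ to the other side). Your added verification of the hypotheses of Proposition~\ref{Sec2Prop4} and the observation that the block-scalar form is re-derived independently are fine but do not change the substance.
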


\begin{proof}
The starting point in this proof is the following equation%
\begin{equation}
\sum_{a<b}^{n}(\psi _{R}^{\mu })(ab)=\left( 
\begin{array}{c}
n \\ 
2
\end{array}
\right) \frac{\chi ^{\mu }(12)}{d_{\mu }}\mathbf{1}_{d_{\mu }},
\end{equation}
where $\mathbf{1}_{d_{\mu }}$ is the unit matrix of dimension $d_{\mu }$ and on $LHS$ we have the sum of all transpositions in $S(n)$ which forms an equivalence class. This equation follows from the fact
that the famous Schur lemma implies that the sum of all elements of any
the equivalence class of an arbitrary group in any irrep is proportional to the unit
matrix. Taking trace we derive the coefficient of proportionality. Next we
rewrite $LHS$ as follows 
\begin{equation}
\sum_{a<b}^{n}(\psi _{R}^{\mu })(ab)=\sum_{a=1}^{n-1}(\psi _{R}^{\mu
})(an)+\sum_{a<b}^{n-1}(\psi _{R}^{\mu })(ab),
\end{equation}%
so in the second sum on $RHS$  the summation is over all equivalence
class of transpositions but in the subgroup $S(n-1)$. Therefore, from
definition of $PRIRs$ matrix of the second sum is block diagonal
matrix and in the diagonal blocks we have irreps of $S(n-1)$
included in $\psi _{R}^{\mu }$. Again from the Schur lemma we get in each
diagonal block in the last sum 
\begin{equation}
\sum_{a<b}^{n}(\varphi ^{\alpha })(ab)=\left( 
\begin{array}{c}
n-1 \\ 
2
\end{array}
\right) \frac{\chi ^{\alpha }(12)}{d_{\alpha }}\mathbf{1}_{d_{\alpha }}
\end{equation}
for any irrep $\varphi ^{\alpha }$ of $S(n-1)$ included in $\psi _{R}^{\mu
}$, so we see that diagonal blocks in matrix $\sum_{a<b}^{n-1}(\psi
_{R}^{\mu })(ab)$ are also diagonal. Shifting the diagonal matrix $
\sum_{a<b}^{n-1}(\psi _{R}^{\mu })(ab)$ on $RHS$ of the first equation we
get the result.
\end{proof}
In the above proof, we see that the defining property of $PRIRs$, i.e. partial reduction plays an essential role in it.

\begin{example}
In the simplest case, i.e. for identity irrep $\mu =id$, we have
\begin{equation}
\eta _{id}(id)=n.
\end{equation}
\end{example}

In general the above formulae for eigenvalues $\eta _{\mu }(\alpha )$ of
the transversal operator $\Upsilon $ are not entirely analytical because for
natural characteristics of irreps characters $\chi^{\mu}$
and corresponding dimension $d_{\mu }$ there are not analytical expressions.
However, for particular irreps of group $S(n)$ there exists
purely analytical formulae for these quantities. Namely, we have 

\begin{lemma}[see~\cite{Fulton1991-book-rep}] 
\label{Lemma:Harris}
Let $\mu \equiv \mu _{k}=(n-k,k):1\leq
k\leq \frac{1}{2}n$ be a two-row  partition of irrep of $S(n)$, then 
\begin{equation}
\label{eq:FH}
\chi ^{\mu _{k}}(12)=\binom{n-2}{k}+\binom{n-2}{k-2}-\binom{n-2}{k-1}-\binom{n-2
}{k-3},
\end{equation}
and 
\begin{equation}
\dim \mu _{k}=d_{\mu _{k}}=\binom{n}{k}-\binom{n}{k-1}.
\end{equation}
When $k=1$ we set $\binom{n-2}{k-2}\equiv 0\equiv \binom{n-2}{k-3}$, and when $k=2$ we set $\binom{n-2}{k-3}\equiv 0$ in~\eqref{eq:FH}.
\end{lemma}

Next, we have for $n-k>k$
\begin{equation}
\label{eq:decomp1}
\operatorname{Res}\downarrow _{S(n-1)}^{S(n)}(\psi^{\mu
_{k}})= \varphi^{\alpha _{k}}\oplus \varphi^{\alpha _{k-1}},
\end{equation}
where $\alpha_k=(n-1-k,k),$ and $\alpha_{k-1}=(n-1-(k-1),k-1)$,
so $\alpha _{k}, \alpha _{k-1}$ are irreps of $S(n-1)$. For $n-k=k$, we have
\begin{equation}
\label{eq:decomp2}
\operatorname{Res}\downarrow _{S(n-1)}^{S(n)}(\psi^{\mu_{k}})=\varphi^{\alpha _{k-1}},
\end{equation}
where $\alpha _{k-1}=(n-1-(k-1),k-1)$.

Using Lemma~\ref{Lemma:Harris} and expressions~\eqref{eq:decomp1},~\eqref{eq:decomp2}, we derive purely analytical formulae for the eigenvalues of the transversal operator $\Upsilon $ in two rows $PRIRs$ of the group $S(n)$.

\begin{proposition}
\label{L:JMd2}
Let $\mu \equiv \mu _{k}=(n-k,k):1\leq k\leq \frac{1}{2}n$ be a two row
partition of irrep of $S(n)$, then 
\begin{enumerate}
\item If $n-k>k$ the transversal operator $\Upsilon =\sum_{a=1,\ldots,n}(an)$ takes in $PRIR$ $\psi^{\mu _{k}}$ diagonal form with two different eigenvalues 
\begin{equation}
\eta _{\mu _{k}}(\alpha _{k})=\binom{n}{2}\frac{\binom{n-2}{k}+\binom{n-2}{
k-2}-\binom{n-2}{k-1}-\binom{n-2}{k-3}}{\binom{n}{k}-\binom{n}{k-1}}-\binom{n-1}{2}\frac{\binom{n-3}{k}+\binom{n-3}{k-2}-\binom{n-3}{k-1}-\binom{
n-3}{k-3}}{\binom{n-1}{k}-\binom{n-1}{k-1}}+1
\end{equation}
with multiplicity $d_{\alpha _{k}}=\binom{n-1}{k}-\binom{n-1}{k-1}$, and 
\begin{equation}
\eta _{\mu _{k}}(\alpha _{k-1})=\binom{n}{2}\frac{\binom{n-2}{k}+\binom{n-2}{
k-2}-\binom{n-2}{k-1}-\binom{n-2}{k-3}}{\binom{n}{k}-\binom{n}{k-1}}-\binom{n-1}{2}\frac{\binom{n-3}{k-1}+\binom{n-3}{k-3}-\binom{n-3}{k-2}-
\binom{n-3}{k-4}}{\binom{n-1}{k-1}-\binom{n-1}{k-2}}+1
\end{equation}
with multiplicity $d_{\alpha _{k-1}}=\binom{n-1}{k-1}-\binom{n-1}{k-2}$.

\item If $n-k=k$ then $\psi ^{\mu _{k}}(\Upsilon )$ has only one eigenvalue
equal to  $\eta _{\mu _{k}}(\alpha _{k-1})$ with multiplicity $d_{\alpha
_{k-1}}=\binom{n-1}{k-1}-\binom{n-1}{k-2}=d_{\mu _{k}}.$ In this particular case, $\psi ^{\mu _{k}}(\Upsilon )$ is proportional to unit matrix.
\end{enumerate}
\end{proposition}

\begin{definition}
The element $J_{n}=\sum_{a=1}^{n-1}(an)$ is called Jucys-Murphy
element (JM element) in the group algebra $\mathbb{C}[S(n)]$ and together with reduced JM element $
J_{k}=\sum_{a=1}^{k-1}(ak),$ $k=2,\ldots,n$ it plays an important role in
the representation theory of symmetric group $S(n)$~\cite{Mu, Ju, Okunkov,Tullio}.
\end{definition}

In general, calculating analytically the spectrum of JM elements 
in the group algebra $\mathbb{C}[S(n)]$ is not easy (see~\cite{Tullio}). Using Proposition~\ref{Sec2Prop4} together with the above example we get

\begin{corollary}
\label{Sec2Cor2}
In any $PRIR$ $\psi _{R}^{\mu }$ of the symmetric group $S(n)$ we have the
following spectral decomposition JM operator $J_{n}=
\sum_{a=1}^{n-1}(an)$ 
\begin{equation}
\psi _{R}^{\mu }(J_{n})=\bigoplus _{\alpha \in \mu }\gamma _{\mu }(\alpha )
\mathbf{1}_{\alpha },
\end{equation}
where 
\begin{equation}
\gamma _{\mu }(\alpha )=\left( 
\begin{array}{c}
n \\ 
2
\end{array}
\right) \frac{\chi ^{\mu }(12)}{d_{\mu }}-\left( 
\begin{array}{c}
n-1 \\ 
2
\end{array}
\right) \frac{\chi ^{\alpha }(12)}{d_{\alpha }}=\eta _{\mu }(\alpha )-1,
\end{equation}
and multiplicity of the eigenvalue $\gamma _{\mu }(\alpha )$ is equal to $
d_{\alpha }$.
\end{corollary}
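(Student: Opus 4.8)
The plan is to recognise that, for the standard transversal $T=\{(am):a=1,\ldots,m\}$ of $S(m-1)\subset S(m)$, the ``$a=m$'' term $(mm)$ is just the identity $e$, so the transversal element of the group algebra splits as $\Upsilon=\sum_{k=1}^{m}t_{k}=e+\sum_{a=1}^{m-1}(am)=e+J_{m}$. Hence in any PRIR $\psi_{R}^{\mu}$ one has $\psi_{R}^{\mu}(J_{m})=\psi_{R}^{\mu}(\Upsilon)-\psi_{R}^{\mu}(e)=\psi_{R}^{\mu}(\Upsilon)-\mathbf{1}_{d_{\mu}}$, and the whole statement reduces to subtracting the identity from the result already obtained for $\psi_{R}^{\mu}(\Upsilon)$.

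Concretely, I would first invoke the preceding Proposition (the $G=S(m)$, $H=S(m-1)$ example), whose hypotheses are met: simple reducibility of $\operatorname{Res}\downarrow_{H}^{G}(\psi_{R}^{\mu})$ holds because the branching $S(m)\downarrow S(m-1)$ is multiplicity-free, and the centralising condition~\eqref{Sec2eq4} holds since conjugation by $h\in S(m-1)$ sends $(am)\mapsto(h(a)\,m)$ and thus permutes the transpositions $(am)$, $a=1,\dots,m-1$, among themselves. That Proposition gives $\psi_{R}^{\mu}(\Upsilon)=\bigoplus_{\alpha\in\mu}\eta_{\mu}(\alpha)\mathbf{1}_{\alpha}$ with $\eta_{\mu}(\alpha)=\binom{m}{2}\chi^{\mu}(12)/d_{\mu}-\binom{m-1}{2}\chi^{\alpha}(12)/d_{\alpha}+1$. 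Since $\psi_{R}^{\mu}(e)=\mathbf{1}_{d_{\mu}}=\bigoplus_{\alpha\in\mu}\mathbf{1}_{\alpha}$ is block-diagonal in exactly the same PRIR block decomposition, subtracting it block by block yields $\psi_{R}^{\mu}(J_{m})=\bigoplus_{\alpha\in\mu}(\eta_{\mu}(\alpha)-1)\mathbf{1}_{\alpha}$, so setting $\gamma_{\mu}(\alpha):=\eta_{\mu}(\alpha)-1$ gives precisely the claimed formula, the $+1$ having cancelled.

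For the multiplicity statement I would note that each diagonal block in this decomposition is the $d_{\alpha}\times d_{\alpha}$ scalar matrix $\gamma_{\mu}(\alpha)\mathbf{1}_{\alpha}$ attached to the (by multiplicity-freeness, unique) copy of the irrep $\varphi^{\alpha}$ of $S(m-1)$ sitting inside $\psi_{R}^{\mu}$, so the eigenvalue $\gamma_{\mu}(\alpha)$ occurs with multiplicity $\dim\varphi^{\alpha}=d_{\alpha}$ in that block, as asserted.

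There is essentially no hard step here — the corollary is a rescaled restatement of the preceding example. The one point that needs a line of care is the bookkeeping that the standard transversal genuinely contains the identity (the $a=m$ term), so that $\Upsilon-e$ equals $J_{m}$ exactly rather than being off by some other permutation; and, for the multiplicity claim, that one is reading the multiplicity off per block rather than as the global multiplicity of $\gamma_{\mu}(\alpha)$ as an eigenvalue of $\psi_{R}^{\mu}(J_{m})$ — the two coincide here only because the branching is multiplicity-free, so each $\alpha$ contributes a single block.
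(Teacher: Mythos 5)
Your proposal is correct and follows exactly the route the paper intends: the standard transversal contains the identity (the $a=m$ term $(mm)=e$), so $\Upsilon=e+J_{m}$, and the corollary is obtained by subtracting $\mathbf{1}_{d_{\mu}}$ block by block from the result of the preceding proposition, which is precisely why $\gamma_{\mu}(\alpha)=\eta_{\mu}(\alpha)-1$. Your extra care in verifying the hypotheses of Proposition~\ref{Sec2Prop4} (multiplicity-free branching and conjugation-invariance of $\{(am)\}$ under $S(m-1)$) and in distinguishing per-block from global multiplicity is consistent with, and slightly more explicit than, what the paper records.
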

Again, knowing the spectrum of the operator $J_{n}$ in any $PRIR$, one can
determine the spectrum of $J_{n}$ in any representation.
As an example let us consider the natural permutation representation of
symmetric as it is defined in~\eqref{eq:natrep}.
It is known~\cite{Fulton1991-book-rep} that then for $d=2$, we have the following  
\begin{equation}
V_{d=2}^{n}=m_{id}\psi ^{id}\oplus \left[ \bigoplus_{1\leq
k\leq \frac{1}{2}n}m_{\mu _{k}}\psi ^{\mu _{k}}\right],
\end{equation}%
where $m_{id}=n+1$, and $m_{\mu _{k}}=n-2k+1$ are the multiplicities of
corresponding irreps which are assumed to be $PRIRs$. Now using the above results we may derive the spectrum of the $JM$ operator in the natural representation $V_{d=2}^{n}$.

\begin{proposition}
The spectrum of the operator $V_{d=2}^{n}(J_{n})$ acting in the space $(\mathbb{C}^{2})^{\otimes n}$ is the following
\begin{enumerate}
\item If $\mu_k=id$, then
\begin{equation}
\gamma _{id}(id)=n-1,
\end{equation}
with multiplicity $n+1$. 
\item If $\mu _{k}$ is such that $n-k>k$, then   
\begin{equation}
\gamma _{\mu _{k}}(\alpha _{k})=\eta _{\mu _{k}}(\alpha _{k})-1,
\end{equation}
with multiplicity $m_{\mu _{k}}d_{\alpha _{k}}=(n-2k+1)\binom{n-1}{k}-\binom{n-1}{k-1}$, and 
\begin{equation}
\gamma _{\mu _{k}}(\alpha _{k-1})=\eta _{\mu _{k}}(\alpha _{k-1})-1,
\end{equation}
with multiplicity $m_{\mu _{k}}d_{\alpha _{k-1}}=(n-2k+1)\binom{n-1}{k-1}-\binom{n-1}{k-2}$.
\item If $\mu _{k}$ is such that $n-k=k$, then 
\begin{equation}
\gamma _{\mu _{k}}(\alpha _{k-1})=\eta _{\mu _{k}}(\alpha _{k-1})-1
\end{equation}
with multiplicity $m_{\mu _{k}}d_{\alpha _{k-1}}.$
\end{enumerate}
\end{proposition}

It is well known that any group algebra $\mathbb{C}[G]$ ($G$ is finite) has the following decomposition into irreps
\begin{equation}
\mathbb{C}[G]=\bigoplus _{\mu \in \widehat{G}}m_{\mu }\psi ^{\mu },\quad 
\end{equation}%
where $m_{\mu }=\dim \psi ^{\mu }$. From this and Corollary~\ref{Sec2Cor2} we get:
\begin{proposition}
The spectrum of JM element $J_{n}=\sum_{a=1}^{n-1}(an)$, acting
on the group algebra $\mathbb{C}[S(n)]$, i.e in regular representation is given by numbers $\gamma
_{\mu }(\alpha )$ and multiplicity of eigenvalue $\gamma _{\mu }(\alpha )$
is equal to $m_{\mu }d_{\alpha }$.
\end{proposition}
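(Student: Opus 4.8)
The plan is to combine the Wedderburn/Peter--Weyl decomposition of the group algebra with Corollary~\ref{Sec2Cor2}. Recall that the regular representation of $\mathbb{C}[S(m)]$ on itself (left multiplication) decomposes as $\mathbb{C}[S(m)]\cong\bigoplus_{\mu\in\widehat{S(m)}}m_{\mu}\,\psi^{\mu}$ with $m_{\mu}=\dim\psi^{\mu}=d_{\mu}$. Since $J_{m}$ is an element of $\mathbb{C}[S(m)]$, its action on the group algebra is precisely $\bigoplus_{\mu}m_{\mu}\,\psi^{\mu}(J_{m})$, i.e.\ $m_{\mu}$ identical copies of the operator $\psi^{\mu}(J_{m})$ for each $\mu$. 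Hence the spectrum of $J_{m}$ in the regular representation is the union over $\mu$ of the spectrum of $\psi^{\mu}(J_{m})$, with every multiplicity scaled by $m_{\mu}$.

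Next I would replace each $\psi^{\mu}$ by a unitarily equivalent $PRIR$ $\psi_{R}^{\mu}$ with respect to $H=S(m-1)$; this is always possible by the discussion around~\eqref{S2eq2}, and it changes neither the eigenvalues nor their multiplicities of $\psi^{\mu}(J_{m})$, since conjugation by a unitary preserves the spectrum. The standard transversal $T=\{(am):a=1,\ldots,m\}$ satisfies the hypotheses of Proposition~\ref{Sec2Prop4}: simple reducibility of $\operatorname{Res}\downarrow_{H}^{G}(\psi_{R}^{\mu})$ is the multiplicity-free branching rule for $S(m-1)\subset S(m)$, and the conjugation invariance~\eqref{Sec2eq4} of $\sum_{k}t_{k}$ was verified in the example just above. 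Thus Corollary~\ref{Sec2Cor2} applies and yields $\psi_{R}^{\mu}(J_{m})=\bigoplus_{\alpha\in\mu}\gamma_{\mu}(\alpha)\mathbf{1}_{\alpha}$, so the spectrum of $\psi^{\mu}(J_{m})$ consists of the numbers $\gamma_{\mu}(\alpha)$, $\alpha\in\mu$, the value $\gamma_{\mu}(\alpha)$ occurring with multiplicity $d_{\alpha}$ inside one copy of $\psi^{\mu}$.

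Putting the two steps together gives the claim: in the full group algebra the eigenvalue labelled by the pair $(\mu,\alpha)$, $\alpha\in\mu$, occurs with multiplicity $m_{\mu}d_{\alpha}$. A consistency check is $\sum_{\mu}m_{\mu}\sum_{\alpha\in\mu}d_{\alpha}=\sum_{\mu}d_{\mu}^{2}=m!=|S(m)|$, using $\sum_{\alpha\in\mu}d_{\alpha}=d_{\mu}$ (again the branching rule). The only point requiring care --- and the closest thing to an obstacle --- is that the spectrum is a multiset: two distinct pairs $(\mu,\alpha)$ could in principle produce the same numerical value $\gamma_{\mu}(\alpha)$, in which case the genuine multiplicity of that value is the sum of the corresponding $m_{\mu}d_{\alpha}$. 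The statement records the multiplicity attached to each label $(\mu,\alpha)$, which is exactly what the decomposition produces block by block, so no work beyond invoking Corollary~\ref{Sec2Cor2} copy-by-copy is needed.
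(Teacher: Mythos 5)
Your proposal is correct and follows essentially the same route as the paper: decompose the regular representation as $\mathbb{C}[S(m)]=\bigoplus_{\mu}m_{\mu}\psi^{\mu}$ with $m_{\mu}=d_{\mu}$, then apply Corollary~\ref{Sec2Cor2} inside each copy of $\psi^{\mu}$ to get eigenvalue $\gamma_{\mu}(\alpha)$ with multiplicity $d_{\alpha}$ per copy, hence $m_{\mu}d_{\alpha}$ in total. Your added checks (unitary equivalence to a PRIR preserving the spectrum, multiplicity-free branching for $S(m-1)\subset S(m)$, the count $\sum_{\mu}d_{\mu}^{2}=m!$, and the caveat that multiplicities are attached to labels $(\mu,\alpha)$ rather than to numerical values) are all sound and merely make explicit what the paper leaves implicit.
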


We see that by using the $PRIR$ concept one can express the eigenvalues of the 
JM element $J_{n}=\sum_{a=1}^{n-1}(an)$ via characters of irreps of the
groups $S(n)$ and $S(n-1)$, which are basic characteristics of irreps. Their multiplicities are also expressed by basic group representation parameters. Obtained expressions for the spectrum of the JM elements are of a different nature than in~\cite{Okunkov,Vershik2005,Tullio}.

\bigskip One very efficient application of $PRIR$ concept was dedicated studies 
on  PBT-operator in deterministic port-based teleportation scheme, see~\eqref{eq1} and~\eqref{eq2}.  Hereunder we use original notation and we denote the total number of systems (number of ports + teleported state) by $n$, while the number of ports by $N$, and we have of course $n=N+1$. The algebraic structure of the port-based teleportation scheme is
connected with the algebra of partially transposed operators $\mathcal{A}_{n}^{t_{n}}(d)$
acting in $n$-fold tensor product of $d-$dimensional vector space.  This
algebra is not isomorphic (except when $d=2$) with standard permutational
representation of symmetric group $S(n)$ in the Schur-Weyl construction. Due to
the application of properties of $PRIR$'s it was possible to derive elegant
expression for the spectrum of PBT-operator and to derive entanglement fidelity for
deterministic PBT-scheme in all variants. The result for the mentioned eigenvalues are contained in~\cite{Studzinski2017,MozJPA}.

\begin{proposition}
\label{Sec2Prop10}
The eigenvalues $\lambda _{\mu }(\alpha )$ of PBT-operator with $N=n-1$ ports are of the following form 
\begin{equation}
\label{eq:55}
\lambda _{\mu }(\alpha )=(n-1)\frac{m_{\mu }d_{\alpha }}{m_{\alpha }d_{\mu }}%
=\left( 
\begin{array}{c}
n-1 \\ 
2%
\end{array}%
\right) \frac{\chi ^{\mu }(12)}{d_{\mu }}-\left( 
\begin{array}{c}
n-2 \\ 
2%
\end{array}%
\right) \frac{\chi ^{\alpha }(12)}{d_{\alpha }}+d,
\end{equation}%
where $d_{\alpha },d_{\mu }$ are dimensions of irreps $\psi
^{\mu }\in \widehat{S(n-1)}$, $\varphi ^{\alpha }\in \widehat{S(n-2)}$ and $%
m_{\mu },m_{\alpha }$ are corresponding multiplicities is the standard swap
representation of $S(n-1)$ and $S(n-1)$ respectively.
\end{proposition}
Comparing Proposition~\ref{Sec2Prop10} with Corollary~\ref{Sec2Cor2} we get
\begin{corollary}
\label{cor:jucys-eig}
The eigenvalues $\lambda _{\mu }(\alpha )$ of PBT-operator with $N=n-1$ ports
and eigenvalues $\gamma _{\mu }(\alpha )$ of Jucys-Murphy element $J_{n-1}$
are related in a very simple way
\begin{equation}
\lambda _{\mu }(\alpha )-d=(n-1)\frac{m_{\mu }d_{\alpha }}{m_{\alpha }d_{\mu
}}-d=\gamma _{\mu }(\alpha ).
\end{equation}
\end{corollary}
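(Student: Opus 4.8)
The plan is to obtain the corollary by matching two closed forms already in hand. The first step is purely bookkeeping: I would check that Proposition~\ref{Sec2Prop10} and Corollary~\ref{Sec2Cor2} describe the same data in compatible conventions. Corollary~\ref{Sec2Cor2} applied with $m=n-1$ concerns $\psi_R^\mu$ for $\mu\in\widehat{S(n-1)}$ and $\varphi^\alpha$ for $\alpha\in\widehat{S(n-2)}$ with $\varphi^\alpha\in\operatorname{Res}\downarrow_{S(n-2)}^{S(n-1)}(\psi_R^\mu)$, i.e. exactly the pair $(\mu,\alpha)$ indexing the PBT eigenvalues in Proposition~\ref{Sec2Prop10}, where $m_\mu,m_\alpha$ are the Schur--Weyl multiplicities of $S(n-1)$ and $S(n-2)$ in the natural representations. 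With the indexing aligned there is nothing further to reconcile.

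Second, I would place the two formulas side by side. Proposition~\ref{Sec2Prop10} gives
\begin{equation}
\lambda_\mu(\alpha)=(n-1)\frac{m_\mu d_\alpha}{m_\alpha d_\mu}=\binom{n-1}{2}\frac{\chi^\mu(12)}{d_\mu}-\binom{n-2}{2}\frac{\chi^\alpha(12)}{d_\alpha}+d,
\end{equation}
and Corollary~\ref{Sec2Cor2} with $m=n-1$ gives
\begin{equation}
\gamma_\mu(\alpha)=\binom{n-1}{2}\frac{\chi^\mu(12)}{d_\mu}-\binom{n-2}{2}\frac{\chi^\alpha(12)}{d_\alpha}.
\end{equation}
Subtracting $d$ from the character expression for $\lambda_\mu(\alpha)$ produces precisely the character expression for $\gamma_\mu(\alpha)$, hence $\lambda_\mu(\alpha)-d=\gamma_\mu(\alpha)$; the intermediate term $(n-1)\frac{m_\mu d_\alpha}{m_\alpha d_\mu}-d$ is then just the first equality of Proposition~\ref{Sec2Prop10} with $d$ subtracted. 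That is the entire proof.

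There is no computational obstacle; the only place deserving attention is the conventions reconciled in the first step, and it is worth recording why the discrepancy is the clean constant $d$. In character form both eigenvalues are the same two-term combinatorial expression obtained by evaluating the class sum of transpositions in $S(n-1)$ and in $S(n-2)$ by Schur's lemma (exactly as in the derivations leading to Corollary~\ref{Sec2Cor2} and to Proposition~\ref{Sec2Prop10}); the $+d$ present for the PBT operator and absent for $J_{n-1}$ is the scalar contributed by the maximally entangled projector $P_+$ under the partial transpose, playing the role that the $+1$ plays when one passes from the transversal sum $\Upsilon$ in $S(n-1)$ to $J_{n-1}$ (recall $\gamma_\mu(\alpha)=\eta_\mu(\alpha)-1$), with $1$ replaced by $d$ because the PBT operator lives in the $d$-dependent partially transposed algebra $\mathcal{A}_n^{t_n}(d)$ rather than in $\mathbb{C}[S(n-1)]$. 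One could alternatively route the argument through the identity $\rho=(J_{n})^{T_{n}}$ and the discussion of Section~\ref{subsec:Jucys}, but the direct character comparison above is shorter, and I would present it as the proof with the structural observation appended as a remark.
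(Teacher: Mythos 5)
Your proof is correct and is essentially identical to the paper's own derivation, which obtains the corollary by nothing more than placing the character formula of Proposition~\ref{Sec2Prop10} next to that of Corollary~\ref{Sec2Cor2} (with $m=n-1$) and subtracting $d$. Your appended structural remark is consistent with the discussion in Section~\ref{subsec:Jucys} but is not needed for the argument.
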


In this way, we get, from PBT formalism, another (up to our best knowledge not known earlier) expression for the spectrum of the JM element $J_{n-1}$. Another expression for the numbers $\eta _{\mu }(\alpha )$, so also
for the eigenvalues $\gamma _{\mu }(\alpha )$ of $J_{n-1}$, is given in~\cite{MozJPA}.

\begin{remark}
Note that although we have very simple relation between PBT operator $\rho$ and JM operator $J_{n}$:
\begin{equation}
J_{n} \mapsto \rho=J_n^{t_n},
\end{equation}
the spectrum of the operator $\rho$ is in very simple 'shift' relation with the spectrum of the 'shorter' JM operator $J_{n-1}$, see equation~\eqref{eq:55}.
\end{remark}

Let us remind basic properties of the regular representation of a group
algebra $\mathbb{C}[G]$ for finite a finite group $G$.

\begin{proposition}
\label{Sec2Prop12}
Let $\psi ^{\mu }$ be any irrep (not necessarily $PRIR$) of a finite group $G$, then the operators 
\begin{equation}
E_{ij}^{\mu }=\frac{d_{\mu }}{|G|}\sum_{g\in G}\psi _{ji}^{\mu }(g^{-1})g\in \mathbb{C}[G]
\end{equation}%
have the following properties
\begin{equation}
E_{ij}^{\mu }E_{kl}^{\nu }=\delta ^{\mu \nu }E_{il}^{\mu },\quad
gE_{ij}^{\mu }=\sum_{k=1}^{d_{\mu }}\psi _{ki}^{\mu }(g)E_{kj}^{\mu },\quad
\forall j=1,\ldots,d_{\mu },\quad 
\mathbb{C}[G]=\bigoplus _{\mu \in \widehat{G}}E^{\mu },
\end{equation}%
where $E^{\mu }=\operatorname{span}_{\mathbb{C}}\{E_{ij}^{\mu }:i,j=1,\ldots,d_{\mu }\}.$ The algebra $\mathbb{C}[G]$ is a direct sum of non-isomorphic matrix algebras and for any
fixed $j=1,\ldots,d_{\mu }$ set of $d_{\mu }$ vectors $E_{ij}^{\mu
}:i=1,\ldots,d_{\mu }$ form a basis of irrep $\psi ^{\mu }$ contained in $\mathbb{C}[G].$ Each subalgebra $E^{\mu }$ contains $d_{\mu }$ such irreps.
\end{proposition}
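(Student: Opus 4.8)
The plan is to derive every assertion directly from the Schur orthogonality relations, taken in the same $g^{-1}$-form already used in the proof of Proposition~\ref{Sec2Prop2}, namely $\sum_{g\in G}\psi^{\mu}_{pq}(g^{-1})\psi^{\nu}_{rs}(g)=\frac{|G|}{d_{\mu}}\delta^{\mu\nu}\delta_{ps}\delta_{qr}$ (valid for the unitary matrix realizations, which is all we use), together with the homomorphism property $\psi^{\mu}(g_{1}g_{2})=\psi^{\mu}(g_{1})\psi^{\mu}(g_{2})$. Observe that the formula for $E^{\mu}_{ij}$ refers to no PRIR structure whatsoever, so the argument is insensitive to the particular unitary realization of $\psi^{\mu}$ that is chosen.

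First I would prove the multiplication rule. Writing $E^{\mu}_{ij}E^{\nu}_{kl}$ as a double sum over $g,g'\in G$ and substituting $h=gg'$ turns the summand into $\psi^{\mu}_{ji}(g^{-1})\psi^{\nu}_{lk}(h^{-1}g)\,h$; expanding $\psi^{\nu}_{lk}(h^{-1}g)=\sum_{m}\psi^{\nu}_{lm}(h^{-1})\psi^{\nu}_{mk}(g)$ decouples the two group variables, the inner sum over $g$ collapses by orthogonality to $\frac{|G|}{d_{\mu}}\delta^{\mu\nu}\delta_{jk}\delta_{im}$, and what survives is exactly $\delta^{\mu\nu}\delta_{jk}E^{\mu}_{il}$. (I would note that the displayed multiplication rule is to be read with the lower-index Kronecker delta $\delta_{jk}$, which this computation produces.) In particular $E^{\mu}E^{\nu}=\delta^{\mu\nu}E^{\mu}$ for the spans, and the $E^{\mu}_{ij}$ multiply precisely like the matrix units $e_{ij}$ of $M_{d_{\mu}}(\complexes)$.

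Next, for the covariance relation I would again substitute $h=gg'$ in the defining sum for $gE^{\mu}_{ij}$, so that $\psi^{\mu}_{ji}({g'}^{-1})=\psi^{\mu}_{ji}(h^{-1}g)=\sum_{k}\psi^{\mu}_{jk}(h^{-1})\psi^{\mu}_{ki}(g)$; pulling the $g$-dependent factor out of the sum over $h$ yields $gE^{\mu}_{ij}=\sum_{k}\psi^{\mu}_{ki}(g)E^{\mu}_{kj}$. This shows that for each fixed $j$ the subspace $\operatorname{span}_{\complexes}\{E^{\mu}_{ij}:i=1,\dots,d_{\mu}\}$ is a left ideal of $\complexes[G]$ on which left multiplication by $G$ is effected precisely by the matrices $\psi^{\mu}(g)$, i.e.\ it is a copy of the irrep $\psi^{\mu}$; as $j$ ranges over $1,\dots,d_{\mu}$ this exhibits $d_{\mu}$ such copies inside $E^{\mu}$.

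Finally, for the direct-sum statement I would count dimensions: by the orthogonality relations the functions $g\mapsto\psi^{\mu}_{ji}(g^{-1})$, over all $\mu\in\widehat{G}$ and $1\le i,j\le d_{\mu}$, are linearly independent, and there are $\sum_{\mu}d_{\mu}^{2}=|G|=\dim\complexes[G]$ of them (itself a standard consequence of the same relations), so the $E^{\mu}_{ij}$ form a basis of $\complexes[G]$; combined with $E^{\mu}E^{\nu}=\delta^{\mu\nu}E^{\mu}$ this gives $\complexes[G]=\bigoplus_{\mu}E^{\mu}$ as a direct sum of two-sided ideals, each isomorphic as an algebra to a full matrix algebra $M_{d_{\mu}}(\complexes)$ via $E^{\mu}_{ij}\mapsto e_{ij}$, the blocks being separated by the central idempotents $\sum_{i}E^{\mu}_{ii}$. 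There is no real obstacle here — the whole proof is bookkeeping with the orthogonality relations — and the only points requiring care are keeping the order of matrix indices consistent when passing between $\psi(g)$, $\psi(g^{-1})$ and $\overline{\psi(g)}$, and making the substitution $h=gg'$ uniformly so that the residual sum genuinely runs over all of $G$.
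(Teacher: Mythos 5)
Your proof is correct, and in fact the paper offers no proof of this proposition at all --- it is stated as a reminder of standard facts about the regular representation --- so there is nothing to diverge from; your argument is the standard textbook derivation from the Schur orthogonality relations. Your side remark is also right: the computation yields $E_{ij}^{\mu}E_{kl}^{\nu}=\delta^{\mu\nu}\delta_{jk}E_{il}^{\mu}$, and the missing $\delta_{jk}$ in the paper's displayed multiplication rule is evidently a typo, since without it the $E_{ij}^{\mu}$ would not behave as matrix units. The only point stated a bit loosely is that $\sum_{\mu}d_{\mu}^{2}=|G|$ follows not from orthogonality alone but from the completeness of matrix coefficients (equivalently, the decomposition of the regular representation); this is standard and does not affect the argument.
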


If the $\psi ^{\mu }\in \mathbb{C}[G]$ are $PRIR^{\prime }s$, then we may rewrite the expressions from Proposition~\ref{Sec2Prop12} as
follows \ 
\begin{equation}
(E_{R}^{\mu })_{i_{\alpha }\quad \quad j_{\beta }}^{\alpha (a_{\alpha
})\beta (b_{\beta })}=\frac{d_{\mu }}{|G|}\sum_{g\in }(\psi _{R}^{\mu
}(g^{-1})_{j_{\beta }\quad \quad i_{\alpha }}^{\beta (b_{\beta })\alpha
(a_{\alpha })}g,\quad 
\end{equation}%
\begin{equation}
g(E_{R}^{\mu })_{j_{\beta }\quad \quad i_{\alpha }}^{\beta (b_{\beta
})\alpha (a_{\alpha })}=\sum_{\gamma (c_{\gamma }),k_{\gamma }}(\psi
_{R}^{\mu }(g)_{k_{\gamma }\quad \quad j_{\beta }}^{\gamma (c_{\gamma
})\beta (b_{\beta })}(E_{R}^{\mu })_{k_{\gamma }\quad \quad i_{\alpha
}}^{\gamma (c_{\gamma })\alpha (a_{\alpha })},
\end{equation}%
\begin{equation}
(E_{R}^{\mu })_{i_{\alpha }\quad \quad j_{\beta }}^{\alpha (a_{\alpha
})\beta (b_{\beta })}(E_{R}^{\mu })_{i_{\alpha }^{\prime }\quad \quad
j_{\beta }^{\prime }}^{\alpha ^{\prime }(a_{\alpha }^{\prime })\beta
^{\prime }(b_{\beta }^{\prime })}=\delta ^{\beta \alpha ^{\prime }}\delta
^{b_{\beta }a_{\alpha }^{\prime }}\delta _{j_{\beta }i_{\alpha }^{\prime
}}(E_{R}^{\mu })_{i_{\alpha }\quad \quad j_{\beta }^{\prime }}^{\alpha
(a_{\alpha })\beta ^{\prime }(b_{\beta }^{\prime })}.
\end{equation}

Now we consider, in the regular representation of the group algebra $\mathbb{C}[G]$, a construction of induced representation $\operatorname{Ind}\uparrow
_{H}^{G}(\varphi ^{\alpha })$, where $\varphi ^{\alpha }\in \widehat{H}$ is
an arbitrary irrep of the subgroup $H\subset G$. We have (for simplicity we
 omit label $\beta $ in indices $i,j$):

\begin{proposition}
Consider the standard matrix algebra generated by the irrep $\varphi
^{\beta }=(\varphi _{ij}^{\beta })\in \widehat{H}$ (here for simplicity we
write omit label $\beta $ in indices $i,j$) 
\begin{equation}
\label{Sec2eq35}
E_{ij}^{\beta }=\frac{d_{\beta }}{|H|}\sum_{h\in H}\varphi _{ji}^{\beta
}(h^{-1})h\in \mathbb{C}[H]\subset \mathbb{C}[G],
\end{equation}
and the transversal $T=\{t_{k}:k=1,\ldots,\frac{|G|}{|H|}\equiv s\}$, then for
any fixed value of $j=1,\ldots,d_{\beta }$, the $sd_{\beta }$ vectors 
\begin{equation}
t_{k}E_{ij}^{\beta }:k=1,\ldots,\frac{|G|}{|H|}\equiv s,\quad i=1,\ldots,d_{\beta }
\end{equation}%
form a basis of the induced representation $\operatorname{Ind}\uparrow _{H}^{G}(\varphi
^{\beta })$ of the group $G$, embedded in the regular representation of group
algebra $\mathbb{C}[G]$, i.e. we have 
\begin{equation}
\forall g\in G\quad g.t_{k}E_{ij}^{\beta }=\sum_{t_{p}\in
T}\sum_{l=1,\ldots,d_{\alpha }}\varphi _{li}^{\beta
}(t_{p}^{-1}gt_{k})t_{p}E_{lj}^{\beta },
\end{equation}%
where the summation over $t_{p}\in T$ is taken over such $t_{p}\in T$, that $%
t_{p}^{-1}gt_{k}\in H$~\cite{Mill}.
From the uniqueness of coset decomposition, eq.~\eqref{Sec2eq13} we get that such an
element $t_{p}\in T$ is unique and we have $
t_{p}^{-1}gt_{k}=h=h_{k}\in H$ so the above equation may written
equivalently
\begin{equation}
\label{eq:induced}
\forall g\in G\quad g.t_{k}E_{ij}^{\beta }=\sum_{l=1,\ldots,d_{\alpha }}\varphi
_{li}^{\beta }(h_{k})t_{p}E_{lj}^{\beta },
\end{equation}
where  $gt_{k}=t_{p}h_{k}$ is the unique coset
decomposition. Thus we see that the main feature of the induced
representation is that the action of $g\in G$ on basis vectors $
\{t_{k}E_{ij}^{\beta }\}$ permutes the transversal  vectors $
\{t_{k}:k=1,\ldots,\frac{|G|}{|H|}\equiv s\}$  and transforms vectors $
\{E_{ij}^{\beta }:i=1,\ldots,d_{\beta }\}$ ($j$ is fixed)
according irrep $\varphi ^{\beta }$.
\end{proposition}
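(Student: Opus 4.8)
The plan is to verify directly that the $sd_\beta$ vectors $t_k E_{ij}^\beta$ (with $j$ fixed) are linearly independent inside $\mathbb{C}[G]$ and that the left multiplication action of $g\in G$ on this set is precisely the matrix of $\operatorname{Ind}\uparrow_H^G(\varphi^\beta)$ in the form \eqref{eq:induced}. Since dimension counting then gives $\dim \operatorname{Ind}\uparrow_H^G(\varphi^\beta) = s\, d_\beta$, establishing the transformation law on a linearly independent set of the right cardinality is enough. So there are really two things to check: (i) the action formula, and (ii) linear independence.

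For the action formula I would start from the known behaviour of the $E_{ij}^\beta$ under left multiplication by elements of $H$, namely $h E_{ij}^\beta = \sum_{l} \varphi^\beta_{li}(h) E_{lj}^\beta$ (this is the $H$-version of the relation $gE_{ij}^\mu=\sum_k\psi^\mu_{ki}(g)E^\mu_{kj}$ in Proposition~\ref{Sec2Prop12}, valid because $E_{ij}^\beta\in\mathbb{C}[H]$). Given $g\in G$ and $t_k\in T$, use the unique coset decomposition $g t_k = t_p h_k$ with $t_p\in T$, $h_k\in H$, from \eqref{Sec2eq13}. Then
\begin{equation}
g\cdot t_k E_{ij}^\beta = (g t_k) E_{ij}^\beta = t_p h_k E_{ij}^\beta = t_p \sum_{l=1}^{d_\beta}\varphi^\beta_{li}(h_k) E_{lj}^\beta = \sum_{l=1}^{d_\beta}\varphi^\beta_{li}(h_k)\, t_p E_{lj}^\beta,
\end{equation}
which is exactly \eqref{eq:induced}. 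The first displayed version with the sum over $t_p$ restricted to those with $t_p^{-1}g t_k\in H$ is the same statement before invoking uniqueness: for each $t_k$ there is exactly one such $t_p$, and the coefficient is $\varphi^\beta_{li}(t_p^{-1}g t_k)$.

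The remaining point — and the one requiring a little care — is linear independence of $\{t_k E_{ij}^\beta : k=1,\dots,s,\ i=1,\dots,d_\beta\}$. The cleanest route is to use the orthogonal idempotent structure: let $e_\beta = \sum_{i=1}^{d_\beta} E_{ii}^\beta\in\mathbb{C}[H]$, the projector onto the $\varphi^\beta$-isotypic component of $\mathbb{C}[H]$, and note $E_{ij}^\beta e_\beta = E_{ij}^\beta$. Suppose $\sum_{k,i} c_{ki}\, t_k E_{ij}^\beta = 0$. Decompose $\mathbb{C}[G]$ as a left $\mathbb{C}[H]$-module (by right multiplication) — equivalently, observe that the subspaces $t_k \mathbb{C}[H]$ for distinct $k$ are linearly independent since $\{t_k H\}$ partition $G$ and the $t_k\mathbb{C}[H]$ are spanned by disjoint sets of group elements. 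Hence each $\sum_i c_{ki} t_k E_{ij}^\beta = 0$ separately, and multiplying on the left by $t_k^{-1}$ gives $\sum_i c_{ki} E_{ij}^\beta = 0$; since $\{E_{ij}^\beta : i=1,\dots,d_\beta\}$ (with $j$ fixed) is a basis of an irreducible left ideal in $\mathbb{C}[H]$ by Proposition~\ref{Sec2Prop12}, all $c_{ki}=0$. I expect this disjoint-support argument for the $t_k\mathbb{C}[H]$ to be the only genuinely non-routine step; everything else is substitution into the coset decomposition. One can alternatively cite \cite{Mill} for the standard fact that $\operatorname{Ind}\uparrow_H^G$ realised on $\mathbb{C}[G]$ has this basis, but the direct verification above is short and self-contained.
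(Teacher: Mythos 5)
Your proof is correct and follows essentially the same route the paper takes: the action formula is obtained exactly as in the inline derivation accompanying the proposition (unique coset decomposition $gt_k=t_ph_k$ plus the left-multiplication rule $hE^{\beta}_{ij}=\sum_l\varphi^{\beta}_{li}(h)E^{\beta}_{lj}$ from Proposition~\ref{Sec2Prop12} applied within $\mathbb{C}[H]$), the paper itself offering no separate proof beyond this sketch and the citation to~\cite{Mill}. Your disjoint-coset-support argument for linear independence of the $sd_\beta$ vectors is a detail the paper leaves implicit, and it is sound.
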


The subspaces 
\begin{equation}
I_{j}^{\beta }=I_{j_{\beta }}^{\beta }=\operatorname{span}_\mathbb{C}\left\{t_{k}E_{ij}^{\beta }:k=1,\ldots,\frac{|G|}{|H|}\equiv s,\quad i=1,\ldots,d_{\beta
}\right\},\quad j_{\beta }=1,\ldots,d_{\beta }
\end{equation}%
are representation spaces for representation $\operatorname{Ind}\uparrow
_{H}^{G}(\varphi ^{\beta })$. From the above we get

\begin{corollary}
We have the following decomposition of algebra $\mathbb{C}[G]$ 
\begin{equation}
\mathbb{C}[G]=\bigoplus _{\beta \in \widehat{H}} \ \bigoplus _{j_{\beta }=1,\ldots,d_{\beta
}}I_{j_{\beta }}^{\beta } \ ,\quad \mathbb{C}[G]=\bigoplus _{\mu \in \widehat{G}}E^{\mu }.
\end{equation}
\end{corollary}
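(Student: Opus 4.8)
The statement is a corollary, so the plan is to read off both decompositions from results already in hand. The second decomposition, $\mathbb{C}[G]=\bigoplus_{\mu\in\widehat{G}} E^\mu$, is exactly the last assertion of Proposition~\ref{Sec2Prop12}, so nothing needs to be done there except to restate it. The first decomposition is the one that requires a short argument, and the natural route is to combine the two decompositions already established: the block-index version of Proposition~\ref{Sec2Prop12} for $PRIR$'s (the three displayed identities for $(E_R^\mu)$), and the previous Proposition identifying the spaces $I_{j_\beta}^\beta$ as carrier spaces of $\operatorname{Ind}\uparrow_H^G(\varphi^\beta)$ inside $\mathbb{C}[G]$.

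First I would note that, fixing $j_\beta$, the vectors $\{t_k E_{i j}^\beta : k=1,\ldots,s,\ i=1,\ldots,d_\beta\}$ are linearly independent: they are the image of the linearly independent set $\{E^\beta_{ij}\}\subset\mathbb{C}[H]$ under the $s$ distinct cosets $t_k$, and since $\mathbb{C}[G]=\bigoplus_k t_k\,\mathbb{C}[H]$ as a left $\mathbb{C}[H]$-module (equivalently, as a vector space, using the unique coset decomposition~\eqref{Sec2eq13}), independence across different $k$ follows. Hence $\dim I_{j_\beta}^\beta = s\,d_\beta = |G|\,d_\beta/|H|$. Next I would argue the sum over $\beta\in\widehat{H}$ and $j_\beta=1,\ldots,d_\beta$ is direct and exhausts $\mathbb{C}[G]$: summing dimensions gives $\sum_{\beta} d_\beta \cdot (|G|\,d_\beta/|H|) = (|G|/|H|)\sum_\beta d_\beta^2 = (|G|/|H|)\cdot|H| = |G| = \dim\mathbb{C}[G]$, so it suffices to show the spaces span, or alternatively that the sum is direct. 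Spanning is immediate because already $\mathbb{C}[H]=\bigoplus_{\beta}\bigoplus_{j_\beta} \operatorname{span}\{E_{ij}^\beta: i\}$ by Proposition~\ref{Sec2Prop12} applied to $H$, and multiplying on the left by the transversal elements $t_k$ and summing over $k$ recovers all of $\mathbb{C}[G]=\bigoplus_k t_k\mathbb{C}[H]$. The dimension count then forces the sum to be direct.

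The one point that needs a little care — and the only place I expect any friction — is the claim that the $t_k\mathbb{C}[H]$ are internally compatible with the $E^\beta$-block decomposition of $\mathbb{C}[H]$ in a way that keeps the $I_{j_\beta}^\beta$ genuinely independent as $\beta$ and $j_\beta$ vary; but this is handled cleanly by the left-module structure: within each fixed slice $t_k\mathbb{C}[H]$, the decomposition $\mathbb{C}[H]=\bigoplus_{\beta,j_\beta}\operatorname{span}\{E_{ij}^\beta:i\}$ is a direct sum of subspaces, left-multiplication by $t_k$ is injective, and the images for distinct $k$ live in complementary summands $t_k\mathbb{C}[H]$. Assembling over all $k$ gives the direct sum $\bigoplus_{\beta}\bigoplus_{j_\beta} I_{j_\beta}^\beta$, completing the proof. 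The $G$-equivariance making each $I_{j_\beta}^\beta$ a copy of $\operatorname{Ind}\uparrow_H^G(\varphi^\beta)$ is not needed for the decomposition itself — it is already recorded in the preceding Proposition via~\eqref{eq:induced} — but it is worth remarking that the first decomposition is therefore the explicit realization, inside the regular representation, of $\mathbb{C}[G]\cong\bigoplus_{\beta\in\widehat H} d_\beta\,\operatorname{Ind}\uparrow_H^G(\varphi^\beta)$.
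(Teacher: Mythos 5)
Your proposal is correct and follows essentially the route the paper intends: the paper states the corollary with no written proof beyond ``From the above we get,'' relying on Proposition~\ref{Sec2Prop12} for the second decomposition and on the preceding proposition (the basis $\{t_kE_{ij}^{\beta}\}$ of $\operatorname{Ind}\uparrow_H^G(\varphi^{\beta})$ inside $\mathbb{C}[G]$) for the first. Your filling-in of the details --- the vector-space splitting $\mathbb{C}[G]=\bigoplus_k t_k\mathbb{C}[H]$ from the unique coset decomposition, the matrix-unit decomposition of $\mathbb{C}[H]$, and the dimension count $\sum_{\beta}d_{\beta}\cdot s\,d_{\beta}=s|H|=|G|$ forcing directness --- is exactly the routine argument being elided, and it is sound.
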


Now it is well known that the induced representation $\operatorname{Ind}\uparrow
_{H}^{G}(\varphi ^{\alpha })$ of the group $G$ is in general reducible. It
appears that the reduction of $\operatorname{Ind}\uparrow _{H}^{G}(\varphi ^{\alpha })$
onto the direct product of irreps of the group $G$ may be
achieved using $PRIR^{\prime }s$.  Using such a $PRIR$ matrices $\psi
_{R}^{\mu }$ we define the following matrices

\begin{definition}
\label{Sec2Def15}
Let $\varphi ^{\beta }\in \widehat{H\text{ }}$ be an irrep and all $\psi
_{R}^{\mu }\equiv \psi _{R}^{\mu _{\beta }}\in \widehat{G}$ be a $
PRIR^{\prime }s$, such that $\varphi ^{\beta }\in \operatorname{Res}\downarrow
_{H}^{G}(\psi _{R}^{\mu })$. For such an irreps $\varphi ^{\beta
}$ and $PRIR^{\prime }s$ $\psi _{R}^{\mu _{\beta }}$ we define a matrix $
U(\beta )$ with coefficients 
\begin{equation}
U(\beta )_{b_{\beta }\quad l_{\alpha },k_{\beta }}^{\mu \alpha (a_{\alpha
}),~t_{k}}=\sqrt{\frac{|H|d_{\mu }}{|G|d_{\beta }}}(\overline{\psi }
_{R}^{\mu })_{l_{\alpha }\quad \quad k_{\beta }}^{\alpha (a_{\alpha })\beta
(b_{\beta })}(t_{k})=U(\beta )_{RL},
\end{equation}
so the left multi-index $R=
\begin{array}{c}
\mu _{\beta }\quad \alpha (a_{\alpha }) \\ 
b_{\beta }\quad l_{\alpha }
\end{array}$, which includes four indices, runs over all $\mu _{\beta }$, over $PRIR$
indices $\alpha (a_{\alpha })$, $l_{\alpha }$ inside $PRIR^{\prime }s$ $\psi
_{R}^{\mu _{\beta }}$, as well over index $b_{\beta }$, which enumerates
the copies of the irrep $\varphi ^{\beta }$ in $\psi _{R}^{\mu _{\beta }}$. The right muli-index $L=
\begin{array}{c}
t_{k} \\ 
k_{\beta }
\end{array}
$ runs over the natural indices \ of the induced representation $\operatorname{Ind}\uparrow
_{H}^{G}(\varphi ^{\beta })$. Matrix $U(\beta )$ is of dimension $d_{\beta
}[G:H]=\dim \operatorname{Ind}\uparrow _{H}^{G}(\varphi ^{\beta })$.  One can check that 
\begin{equation}
(U^{\dagger})(\beta )_{k_{\beta },\quad b_{\beta }~l_{\alpha }}^{t_{k},~~\mu
\alpha (a_{\alpha })}=\sqrt{\frac{|H|d_{\mu }}{|G|d_{\beta }}}(\overline{
\psi }_{R}^{\mu })_{k_{\beta }\quad \quad l_{\alpha }}^{\beta (b_{\beta
})\alpha (a_{\alpha })}(t_{k})=U^{\dagger}(\beta )_{IR}.
\end{equation}
\end{definition}

Matrix $U(\beta )$ is determined by irrep $\varphi ^{\beta }\in 
\widehat{H}$ because, by assumption irrep $\varphi ^{\beta }$
determines all irreps $\mu _{\beta }\in \widehat{G}$,
which must satisfy $\varphi ^{\beta }\in \operatorname{Res}\downarrow _{H}^{G}(\psi
_{R}^{\mu })$. Note that the matrix $U(\beta )$ is not proportional to some $
PRIR^{\prime }s$ matrix $\psi _{R}^{\mu (\beta )}$, the coefficients of the
matrix $U(\beta )$ \ are equal to corresponding coefficients of the
different matrices $\psi _{R}^{\mu (\beta )}$. In general the matrices $
U(\beta )$ and $\psi _{R}^{\mu (\beta )}$ have different dimension.

Directly from Definition~\ref{Sec2Def15} and the sum rule for $PRIRs$ in Proposition~\ref{Sec2Prop2} we
get

\begin{proposition}
The matrix $U(\beta )$ is unitary with respect to the induced multi-indices $%
I=%
\begin{array}{c}
t_{k} \\ 
k_{\beta }%
\end{array}%
$, i.e. we have 
\begin{equation}
\sum_{t_{k},k_{\beta }}U(\beta )_{b_{\beta }\quad l_{\alpha },k_{\beta
}}^{\nu \alpha (a_{\alpha }),~t_{k}}(U^{\dagger})(\beta )_{k_{\beta },~b_{\beta
}^{\prime }~j_{\gamma }}^{t_{k},~~\mu ~\gamma (c_{\gamma })}=\delta ^{\mu
\nu }\delta _{b_{\beta }b_{\beta }^{\prime }}\delta ^{\alpha \gamma }\gamma
^{a_{\alpha }c_{\gamma }}\delta _{l_{\alpha }j_{\gamma }}
\end{equation}%
and it is known that if a matrix is unitary with respect to the columns,
then it is also unitary with respect to the rows, so the matrix $U(\beta
):\varphi ^{\beta }\in \widehat{H}$ is unitary and has dimension $
d_{\beta }\frac{|G|}{|H|}$.
\end{proposition}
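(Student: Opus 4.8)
The plan is to evaluate the left--hand side of the asserted identity, which is the $RR'$ entry of $U(\beta)U^\dagger(\beta)$ obtained by contracting the induced multi--index $I=\binom{t_k}{k_\beta}$, by substituting the explicit expressions of Definition~\ref{Sec2Def15} and then recognising the resulting double sum over the transversal as an instance of the sum rule of Proposition~\ref{Sec2Prop2}. Writing $R=(\nu,\alpha(a_\alpha),b_\beta,l_\alpha)$ and $R'=(\mu,\gamma(c_\gamma),b_\beta',j_\gamma)$ and inserting the definitions of $U(\beta)_{RI}$ and $U^\dagger(\beta)_{IR'}$ first pulls out the scalar prefactor $\sqrt{\tfrac{|H|d_\nu}{|G|d_\beta}}\,\sqrt{\tfrac{|H|d_\mu}{|G|d_\beta}}=\tfrac{|H|}{|G|}\tfrac{\sqrt{d_\mu d_\nu}}{d_\beta}$ and leaves a sum of the shape $\sum_k\sum_{k_\beta}$ of a product of two matrix elements of $PRIR$'s evaluated on transversal elements.

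The next step is to bring this sum into exactly the form of Proposition~\ref{Sec2Prop2}. Using unitarity of the $PRIR$ matrices, in the shape $\overline{(\psi_R^\mu)_{ab}(t_k)}=(\psi_R^\mu)_{ba}(t_k^{-1})$, one rewrites one of the two factors so that the product becomes $(\psi_R^\mu)(t_k^{-1})$ times $(\psi_R^\nu)(t_k)$ contracted over the transversal index $k$ and over the single lower block index $k_\beta$ sitting inside the copy $\beta(b_\beta)$ of $\varphi^\beta$ --- precisely the left--hand side of Proposition~\ref{Sec2Prop2} (the roles of $\mu$ and $\nu$ get interchanged by this manipulation, which is harmless because of the $\delta^{\mu\nu}$ that will appear). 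Proposition~\ref{Sec2Prop2} then returns $\tfrac{|G|}{|H|}\tfrac{d_\beta}{d_\mu}\,\delta^{\mu\nu}\delta_{b_\beta b_\beta'}\delta^{\alpha\gamma}\delta^{a_\alpha c_\gamma}\delta_{l_\alpha j_\gamma}$; multiplying by the prefactor collected above and using $\delta^{\mu\nu}$ (hence $d_\nu=d_\mu$) all the constants cancel to $1$, so $\bigl(U(\beta)U^\dagger(\beta)\bigr)_{RR'}=\delta_{RR'}$, which is the claim.

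To upgrade this relation to full unitarity of $U(\beta)$ one only needs that $U(\beta)$ is square, i.e. that the number of multi--indices $R$ equals the number $d_\beta[G:H]=\dim\operatorname{Ind}\uparrow_H^G(\varphi^\beta)$ of induced multi--indices $I$. This is already recorded in Definition~\ref{Sec2Def15}, and it can also be checked independently by restricting the regular representation $\mathbb{C}[G]=\bigoplus_{\mu\in\widehat G}d_\mu\psi^\mu$ to $H$ and comparing the multiplicity of $\varphi^\beta$, which there equals $\sum_\mu m_\beta^\mu d_\mu$ (sum over those $\mu$ for which $\varphi^\beta$ occurs in $\operatorname{Res}\downarrow_H^G\psi_R^\mu$), with its multiplicity in the alternative description of $\operatorname{Res}\downarrow_H^G\mathbb{C}[G]$ as $[G:H]$ copies of $\mathbb{C}[H]=\bigoplus_{\beta\in\widehat H}d_\beta\varphi^\beta$, which equals $[G:H]d_\beta$; in particular no appeal to Frobenius reciprocity is needed here. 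For a square matrix the relation $U(\beta)U^\dagger(\beta)=\mathbf{1}$ is equivalent to $U^\dagger(\beta)U(\beta)=\mathbf{1}$, so $U(\beta)$ is unitary.

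The only delicate point is the index bookkeeping in the first two steps: one must correctly match the complex conjugations and the transpositions of the block--index pairs $\binom{\alpha(a_\alpha)}{l_\alpha}$ versus $\binom{\beta(b_\beta)}{k_\beta}$ appearing in the definitions of $U(\beta)$ and $U^\dagger(\beta)$ with the precise placement of indices in Proposition~\ref{Sec2Prop2}, and keep track of which of $\mu,\nu$, which of $b_\beta,b_\beta'$, and which of $a_\alpha,c_\gamma$ ends up on each side, together with the compensating normalisation constant. Once the two expressions are aligned, all of the genuine mathematical content is already contained in the sum rule, and only a one--line check of the prefactor remains.
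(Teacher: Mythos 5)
Your proposal is correct and follows essentially the same route as the paper: the identity is obtained by inserting the explicit coefficients of $U(\beta)$ and $U^{\dagger}(\beta)$ from Definition~\ref{Sec2Def15}, using unitarity of the $PRIR$ matrices to turn the conjugated factor into $\psi_R(t_k^{-1})$ with transposed block indices, and then invoking the sum rule of Proposition~\ref{Sec2Prop2}, whose constant $\tfrac{|G|}{|H|}\tfrac{d_\beta}{d_\mu}$ cancels the normalisation. Your independent verification that the matrix is square, via the count $\sum_{\mu}m_\beta^\mu d_\mu=[G:H]\,d_\beta$ of the multiplicity of $\varphi^\beta$ in $\operatorname{Res}\downarrow_H^G\mathbb{C}[G]$, is a welcome addition, since the paper merely asserts the dimension $d_\beta[G:H]$ before passing from one-sided to two-sided unitarity.
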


Now let us consider subspace $I_{j_{\beta }}^{\beta }\subset \mathbb{C}[G]$ with fixed label $j_{\beta }=1,\label,d_{\beta }$, a representation
space for $\operatorname{Ind}\uparrow _{H}^{G}(\varphi ^{\beta })$, which in general, as
representation of $G$ is reducible. It appears that matrix $U(\beta )$
realize reduction of natural basis $\{t_{k}E_{i_{\beta }j_{\beta }}^{\beta
}\}$ in $I_{j_{\beta }}^{\beta }$ to the reduced one with $PRIR$ operators $
(E^{\mu })_{l_{\alpha }\quad \quad j_\beta}^{\alpha (a_{\alpha })\beta (b_{\beta
})}$ as a basis.

\begin{lemma}
\label{Sec2Le17}
Let $\varphi ^{\beta }\in \widehat{H}$ be an irrep and let us fix label $
j_{\beta }=1,\ldots,d_{\beta }$. Then the unitary matrix $U(\beta )$ transforms
the natural basis $\{t_{k}E_{i_{\beta }j_{\beta }}^{\beta }\}$ of the
subspace $I_{j_{\beta }}^{\beta }\subset \mathbb{C}[G],$  representation space for $\operatorname{Ind}\uparrow _{H}^{G}(\varphi ^{\beta})$ onto a reduced basis of this representation space,  that is we have
\begin{equation}
\label{Sec2eq43}
\sqrt{\frac{|G|}{|H|}\frac{d_{\mu }}{d_{\beta }}}\sum_{t_{p}\in
T}\sum_{k_{\beta }=1,\ldots,d_{\beta }}U(\beta )_{b_{\beta }\quad l_{\alpha
},k_{\beta }}^{\mu \alpha (a_{\alpha }),~t_{p}}t_{p}E_{k_{\beta }j_{\beta
}}^{\beta (b_{\beta })}=(E^{\mu })_{l_{\alpha }\quad \quad j_{\beta
}}^{\alpha (a_{\alpha })\beta (b_{\beta })},
\end{equation}
where 
\begin{equation}
(E^{\mu })_{l_{\alpha }\quad \quad j_{\beta }}^{\alpha (a_{\alpha })\beta
(b_{\beta })}=\frac{d_{\mu }}{|G|}\sum_{g\in G}(\psi _{R}^{\mu })_{j_{\beta
}\quad \quad l_{\alpha }}^{\beta (b_{\beta })\alpha (a_{\alpha
})}(g^{-1})g\in I_{j_{\beta }}^{\beta }
\end{equation}
are standard matrix operators of the group algebra $\mathbb{C}[G]$, defined by the irrep $\psi _{R}^{\mu }$ in the $PRIR
$ version included in $I_{j_{\beta }}^{\beta },$ the representation $
\operatorname{Ind}\uparrow _{H}^{G}(\varphi ^{\beta })$. 
\end{lemma}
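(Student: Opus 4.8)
The plan is to prove~\eqref{Sec2eq43} by a direct computation: take the operator $(E^{\mu})_{l_{\alpha}\ j_{\beta}}^{\alpha(a_{\alpha})\beta(b_{\beta})}$ standing on the right-hand side, insert its definition as a sum over all of $G$, break that sum up according to the coset decomposition $g=t_{p}h$ of~\eqref{Sec2eq13}, and then recognize what comes out as exactly the combination of the natural-basis vectors $t_{p}E_{k_{\beta}j_{\beta}}^{\beta(b_{\beta})}$ weighted by the coefficients $U(\beta)$ of Definition~\ref{Sec2Def15}. In particular this also shows, as a by-product, that $(E^{\mu})_{l_{\alpha}\ j_{\beta}}^{\alpha(a_{\alpha})\beta(b_{\beta})}$ genuinely lies in $I_{j_{\beta}}^{\beta}$.

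Concretely, I would first write each $g\in G$ uniquely as $g=t_{p}h$ with $t_{p}\in T$, $h\in H$, and replace $\sum_{g\in G}$ by $\sum_{p=1}^{s}\sum_{h\in H}$. For the matrix element $(\psi_{R}^{\mu})_{j_{\beta}\ l_{\alpha}}^{\beta(b_{\beta})\alpha(a_{\alpha})}(g^{-1})$ I would use unitarity of $\psi_{R}^{\mu}$ to rewrite it as the complex conjugate of $(\psi_{R}^{\mu})_{l_{\alpha}\ j_{\beta}}^{\alpha(a_{\alpha})\beta(b_{\beta})}(t_{p}h)$ and then apply the PRIR multiplication rule~\eqref{Sec2eq17}, which peels the $h$-dependence off into $\varphi^{\beta}$, giving
\begin{equation}
(\psi_{R}^{\mu})_{j_{\beta}\ l_{\alpha}}^{\beta(b_{\beta})\alpha(a_{\alpha})}(g^{-1})=\sum_{k_{\beta}=1}^{d_{\beta}}\varphi_{j_{\beta}k_{\beta}}^{\beta}(h^{-1})\,(\overline{\psi}_{R}^{\mu})_{l_{\alpha}\ k_{\beta}}^{\alpha(a_{\alpha})\beta(b_{\beta})}(t_{p}).
\end{equation}
This is the single place where the defining block structure of a PRIR is really used: it is what lets the subgroup element be separated off cleanly, with the intermediate index forced into the $\beta(b_{\beta})$ block. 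Next I would carry out the $h$-summation first: since $E_{k_{\beta}j_{\beta}}^{\beta}=\frac{d_{\beta}}{|H|}\sum_{h\in H}\varphi_{j_{\beta}k_{\beta}}^{\beta}(h^{-1})h$ by~\eqref{Sec2eq35}, the inner sum $\sum_{h\in H}\varphi_{j_{\beta}k_{\beta}}^{\beta}(h^{-1})\,t_{p}h$ equals $\frac{|H|}{d_{\beta}}\,t_{p}E_{k_{\beta}j_{\beta}}^{\beta}$. What is left is $\frac{d_{\mu}}{|G|}\frac{|H|}{d_{\beta}}\sum_{p,k_{\beta}}(\overline{\psi}_{R}^{\mu})_{l_{\alpha}\ k_{\beta}}^{\alpha(a_{\alpha})\beta(b_{\beta})}(t_{p})\,t_{p}E_{k_{\beta}j_{\beta}}^{\beta}$; identifying $(\overline{\psi}_{R}^{\mu})_{l_{\alpha}\ k_{\beta}}^{\alpha(a_{\alpha})\beta(b_{\beta})}(t_{p})$ with $U(\beta)_{b_{\beta}\ l_{\alpha},k_{\beta}}^{\mu\alpha(a_{\alpha}),t_{p}}$ up to the normalization factor $\sqrt{|H|d_{\mu}/(|G|d_{\beta})}$ of Definition~\ref{Sec2Def15}, and collecting the remaining scalars $d_{\mu}/|G|$ and $|H|/d_{\beta}$, produces the prefactor written in~\eqref{Sec2eq43} and completes the argument.

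The computation is essentially routine; the main thing to watch is the bookkeeping of the PRIR multi-indices — which index sits in which $H$-block, the complex conjugations generated by unitarity at two separate spots, and the several factors of $|G|/|H|$, $d_{\mu}$, $d_{\beta}$ that have to be tracked through the normalizations. A more structural but less elementary alternative would be to invoke that $U(\beta)$ is unitary (the preceding Proposition) together with the observation that the two families $\{t_{p}E_{k_{\beta}j_{\beta}}^{\beta}\}$ and $\{(E^{\mu})_{l_{\alpha}\ j_{\beta}}^{\alpha(a_{\alpha})\beta(b_{\beta})}\}$ both span $I_{j_{\beta}}^{\beta}$ and carry the representation $\operatorname{Ind}\uparrow_{H}^{G}(\varphi^{\beta})$ — the former in its natural realization~\eqref{eq:induced}, the latter as a direct sum of PRIRs $\psi_{R}^{\mu}$ — so that by Schur's lemma the intertwiner between them must agree, block by block in $\mu$, with $U(\beta)$ up to scalars; but the direct verification above is self-contained and pins down the normalization for free.
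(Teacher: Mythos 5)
Your proposal follows essentially the same route as the paper's (very terse) proof: decompose $g=t_{p}h$, apply the PRIR factorization~\eqref{Sec2eq17} to peel off the $h$-dependence, resum over $H$ to produce $t_{p}E^{\beta}_{k_{\beta}j_{\beta}}$, and read off the $U(\beta)$ coefficients; the substance of the computation is correct and more complete than what the paper writes down. One caveat: your own bookkeeping does \emph{not} reproduce the prefactor displayed in~\eqref{Sec2eq43}. The scalars you collect give $\frac{d_{\mu}}{|G|}\cdot\frac{|H|}{d_{\beta}}\cdot\sqrt{\frac{|G|d_{\beta}}{|H|d_{\mu}}}=\sqrt{\frac{|H|}{|G|}\frac{d_{\mu}}{d_{\beta}}}$, i.e.\ the reciprocal of the stated $\sqrt{\frac{|G|}{|H|}\frac{d_{\mu}}{d_{\beta}}}$ in the $|G|/|H|$ part; a sanity check with $H=\{e\}$ (where $\sum_{h}\varphi^{\beta}_{j_{\beta}k_{\beta}}(h^{-1})h=e$ and the identity must reduce to the definition of $E^{\mu}_{ij}$) confirms that your constant is the right one and that the lemma as printed carries a normalization typo — indeed the paper's own proof displays yet a third, mutually inconsistent constant $\frac{|G|}{|H|}\frac{d_{\mu}}{d_{\beta}}$. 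So rather than asserting in the last step that the scalars "produce the prefactor written in~\eqref{Sec2eq43}", you should state the prefactor you actually obtain and note the discrepancy; none of this affects the structural conclusion (Theorem~\ref{Sec2Thm18}), which only needs the map to be an invertible intertwiner.
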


\begin{proof}
In order to prove the equation 
\begin{equation}
\frac{|G|}{|H|}\frac{d_{\mu }}{d_{\beta }}\sum_{t_{p}\in T}\sum_{k_{\beta
}=1,\ldots,d_{\beta }}\overline{(\psi _{R}^{\mu })}_{l_{\alpha }\quad \quad
k_{\beta }}^{\alpha (a_{\alpha })\beta (b_{\beta })}(t_{k})t_{p}E_{k_{\beta
}j_{\beta }}^{\beta (b_{\beta })}=(E^{\mu })_{l_{\alpha }\quad \quad
j_{\beta }}^{\alpha (a_{\alpha })\beta (b_{\beta })},
\end{equation}%
it is enough to use the definition of the matrix operators $E_{k_{\beta
}j_\beta}^{\beta }$ and $(E^{\mu })_{l_{\alpha }\quad \quad j_\beta}^{\alpha (a_{\alpha
})\beta (b_{\beta })}$ (see Eq.~\eqref{Sec2eq35}) together with equation~\eqref{Sec2eq17}.
\end{proof}

From  equation~\eqref{Sec2eq43} in  Lemma~\ref{Sec2Le17} we see that each irrep $\varphi
^{\beta (b_{\beta })}$ of $H$, where $\beta (b_{\beta })=1,\ldots,m_{\beta
}^{\mu },$ included in $\operatorname{Res}\downarrow _{H}^{G}(\psi _{R}^{\mu })$
defines one irrep $\psi _{R}^{\mu }\equiv \psi _{R}^{\mu (\beta )}$ (the
index $j_{\beta }$ is fixed) with basis vectors indexed by the left pair of 
$PRIR$ indices $\left( 
\begin{array}{c}
\alpha (a_{\alpha }) \\ 
l_{\alpha }%
\end{array}%
\right) $ in the element $(E^{\mu })_{l_{\alpha }\quad \quad j_{\beta
}}^{\alpha (a_{\alpha })\beta (b_{\beta })}$ and this correspondence is
one-to-one from the invertibility of the matrix $U(\beta )$.
From this as a corollary we get

\begin{theorem}
\label{Sec2Thm18}
Let $\varphi ^{\beta }\in \widehat{H},$  $I_{j_{\beta }}^{\beta }\subset \mathbb{C}[G]$ be the corresponding  representation space for $\operatorname{Ind}\uparrow
_{H}^{G}(\varphi ^{\beta })$ with natural basis $\{t_{p}E_{k_{\beta
}j_{\beta }}^{\beta }\}$  and $U(\beta )$ be an unitary $PRIR$ matrix
determined by $\varphi ^{\beta }$, then
\begin{enumerate}
\item the transformation of the natural basis of the representation $
\operatorname{Ind}\uparrow _{H}^{G}(\varphi ^{\beta })$ in $
\mathbb{C}[G]$ to the reduced one is realised by the $PRIR$ unitary matrix $
U(\beta )$ in the following way
\begin{equation}
\label{eq:ind-red}
U(\beta ):t_{k}E_{i_{\beta }j_{\beta }}^{\beta }\rightarrow \sqrt{\frac{|G|}{%
|H|}\frac{d_{\mu }}{d_{\beta }}}\sum_{t_{p}\in T}\sum_{k_{\beta
}=1,\ldots,d_{\beta }}U(\beta )_{b_{\beta }\quad l_{\alpha },k_{\beta }}^{\mu
\alpha (a_{\alpha }),~t_{p}}t_{p}E_{k_{\beta }j_\beta}^{\beta }=(E^{\mu
})_{l_{\alpha }\quad \quad j_\beta}^{\alpha (a_{\alpha })\beta (b_{\beta })},
\end{equation}

\item from this one can deduce the following decomposition of the induced
representations%
\begin{equation}
\operatorname{Ind}\uparrow _{H}^{G}(\varphi ^{\beta })=\bigoplus _{\mu (\beta )}m_{\beta
}^{\mu }\psi _{R}^{\mu (\beta )},
\end{equation}
so each irrep $\psi _{R}^{\mu }$ appears in representation $\operatorname{Ind}\uparrow
_{H}^{G}(\varphi ^{\beta })$ with multiplicity equal to the multiplicity $
m_{\beta }^{\mu }$ of irrep $\varphi ^{\beta }$ in $\operatorname{Res}\downarrow
_{H}^{G}(\psi _{R}^{\mu }).$
\end{enumerate}
\end{theorem}

The second part of Theorem~\ref{Sec2Thm18} is the famous Frobenius reciprocity theorem~\cite{Curtis,Tullio}. This
classical theorem is usually proved using group character properties and this
proof seems to be the simplest. Our proof of the theorem is technically much
more complicated, but its first result gives in eq. \eqref{eq:ind-red} a unitary matrix $U(\beta )$ which realizes the reduction of the natural basis of induced representation to the
reduced one, and this is the most complicated part of our proof. To our knowledge, this is a new result. 
The standard, character  proof of the Frobenius theorem does not give such
representation reduction (via unitary matrix) because that proof  is
independent of matrix form of considered representations (which is obtained
by use of character theory). The second statement of Theorem \ref{Sec2Thm18}, i.e. the Frobenius
theorem itself is a relatively simple corollary from the multi-index structure
of the matrix $U(\beta ).$

\section{Discussion}
In this paper we discuss the appearance of induced representation in port-based teleportation protocols, focusing on its deterministic version. First, we define the concept of partially reduced irreducible representation for an arbitrary group $G$ and its subgroup $H$ and discuss the most important properties of the introduced concept. This part is presented on the most possible abstract level, i.e. we do not restrict ourselves to a specific choice of the group $G$. Afterward, we choose $G$ to be a symmetric group $S(n)$ with subgroup $H=S(n-1)$ - groups naturally appearing in all variants of the PBT. This restriction allows us to show the relation between basic objects for representation theory for the symmetric group, so-called Jucys-Murphy elements, and the PBT operator $\rho$. In particular, we prove a linear connection between the spectra of these two objects, presenting two different expressions. Spectra of the JM element $J_{n-1}$ and $n-$particle PBT operator are related by a simple shift by factor $d$. In particular, we give new expression for the eigenvalues of the Jucys-Murphy elements based on the irreducible character of the symmetric group, and new orthogonality relations exploiting the concept of PRIRS. In the special case, when one considers the natural representation of Jucys-Murphy elements in dimension 2, we present fully analytical expressions for the mentioned spectrum. Evaluated spectra are crucial in determining entanglement fidelity in deterministic PBT schemes.  At the end, we present a matrix proof of the famous Frobenius reciprocity theorem for characters, and we give the explicit construction of the unitary matrix that realizes the reduction of the natural basis of induced representation to the reduced one.

\section*{Acknowledgements}
This research was funded in whole or in part by the National Science Centre, Poland, Grant number 
     2020/39/D/ST2/01234 (MM, MS).  MH is supported by the National Science Center, Poland
within the QuantERA II Programme (No 2021/03/Y/ST2/00178, acronym ExTRaQT) 
that has received funding from the European Union’s Horizon 2020. For the purpose of Open Access, the author has applied a CC-BY public copyright license to any Author Accepted Manuscript (AAM) version arising from this submission. 
\section*{Conflicts of interest and data availability}
On behalf of all authors, the corresponding author states that there is no conflict of interest and the manuscript has no associated data.

\bibliographystyle{unsrt}
\bibliography{biblio2}
\end{document}